 \newcommand{\A}{\mathcal{A}}
 \newcommand{\B}{\mathcal{B}}
 \newcommand{\s}{\mathcal{S}}
 \newcommand{\M}{\mathcal{M}}
 \newcommand{\T}{\mathcal{T}}
 \newcommand{\K}{\mathcal{K}}
 \newcommand{\argmax}{\operatornamewithlimits{argmax}}
 \newcommand{\argmin}{\operatornamewithlimits{argmin}}
 \newcommand{\pbold}{\mathbf{p}}
 \newcommand{\pbolddown}{\widetilde{\mathbf{p}}}
 \newcommand{\gammai}{\gamma_i}
 \newcommand{\gammadown}{\widetilde{\gamma}}
 \newcommand{\gammaj}{\gamma_j}
 \newcommand{\Phibold}{\boldsymbol{\Phi}}
 \newcommand{\gammahat}{\mathit{\widehat{\gamma}}}
 \newcommand{\Fbold}{\mathbf{F}}
\newcommand{\gammabold}{\boldsymbol{\gamma}}
 \newcommand{\gammabolddown}{\widetilde{\boldsymbol{\gamma}}}
 \newcommand{\gammatargetbold}{\boldsymbol{\gamma}^{\mathrm{tar}}}
 \newcommand{\teta}[1]{\frac{\gamma_{#1}}{\gamma_{#1}+1}}
 \newcommand{\thetadown}[1]{\frac{\gammadown_{#1}}{\gammadown_{#1}+1}}
 \newcommand{\gammatargetdown}{\widetilde{\gamma}^{\mathrm{tar}}}
 \newcommand{\gammatarget}{\gamma^{\mathrm{tar}}}
 \newcommand{\hup}[2]{h_{#1 #2}}
 \newcommand{\hdown}[2]{\widetilde{h}_{#1 #2}}
 \newcommand{\Nup}[1]{N_{#1}}
 \newcommand{\Ndown}[1]{\widetilde{N}_{#1}}
 \newcommand{\pup}{p}
 \newcommand{\pdown}{\widetilde{p}}
 \newcommand{\Phidown}{\widetilde{\Phi}}
 \newcommand{\Pdownbold}{\widetilde{\mathbf{P}}}
 \newcommand{\Pdown}{\widetilde{P}}
\newcommand{\pmax}{p^{\mathrm{max}}}
\newcommand{\Pmaxdown}{\widetilde{P}^{\mathrm{max}}}
\newtheorem {corollary}{Corollary}
\newtheorem {proposition}{Proposition}
\newtheorem {remark}{Remark}
\newtheorem {definition}{Definition}
\begin{document}

\raggedbottom

\title{Low-Complexity SINR Feasibility Checking and Joint Power and Admission Control in Prioritized Multi-tier Cellular Networks 
%Under Co-Channel Deployment
}

\author{Mehdi~Monemi, Mehdi~Rasti, and Ekram Hossain
\thanks{M. Monemi is with the Dept. of Electrical and Computer Engineering, Neyriz Branch, Islamic Azad University, Neyriz, Iran (email: m\_monemi@shirazu.ac.ir). M. Rasti is with the Dept. of Computer Engineering and Information Technology, Amirkabir University of Technology, Tehran, Iran (email: rasti@aut.ac.ir).
E. Hossain is with the Dept. of Electrical and Computer Engineering, University of Manitoba, Canada (email: Ekram.Hossain@umanitoba.ca).}
}

\maketitle
%...................................................................................................................................
% Abstract

	\begin{abstract}
		Next generation cellular networks will consist of  multiple tiers of cells and users associated with different network tiers may have different priorities (e.g., macrocell-picocell-femtocell networks with macro tier prioritized over pico tier, which is again prioritized over femto tier). Designing efficient joint power and admission control (JPAC) algorithms for such networks under a co-channel deployment (i.e., underlay) scenario is of significant importance. Feasibility checking of a given target signal-to-noise-plus-interference ratio (SINR) vector is generally the most significant contributor to the complexity of JPAC algorithms in single/multi-tier underlay cellular networks. This is generally accomplished through iterative strategies whose complexity is either unpredictable or of $O(M^3)$, when the well-known relationship between the SINR vector and the power vector is used, where $M$ is the number of users/links. In this paper, we derive a novel relationship between a given SINR vector and its corresponding uplink/downlink power vector based on which the feasibility checking can be performed with a complexity of $O(B^3+M B)$, where $B$ is the number of base stations. This is significantly less compared to $O(M^3)$ in many cellular wireless networks since the number of base stations is generally much lower than the number of users/links in such networks. The developed novel relationship between the SINR and power vector not only substantially reduces the complexity of designing JPAC algorithms, but also provides insights into developing efficient but low-complexity  power update strategies for prioritized multi-tier cellular networks. We propose two such algorithms and through simulations, we show that our proposed algorithms outperform the existing ones in prioritized cellular networks.
	\end{abstract}
%...................................................................................................................................
% keywords
\begin{keywords}
    5G cellular, multi-tier prioritized networks, underlay channel access, power and admission control, SINR assignment.
\end{keywords}
	
%\IEEEpeerreviewmaketitle

%...................................................................................................................................
% Introduction
\thispagestyle{empty}

\section{Introduction}

To satisfy the ever-increasing demand from the new wireless applications and services such as 3D HD multimedia, VOIP, broadband internet services, HDTV, the fifth generation (5G) wireless communications technologies are being developed, which are expected to attain much higher mobile data volume per unit area, longer battery life, and reduced latency \cite{5G_metis}. 5G cellular wireless networks are expected to be a mixture of network tiers with different sizes, quality-of-service (QoS) requirements, transmit power levels, backhaul connections, and different radio access technologies.  Fig. \ref{fig:multi_tier} shows an instance of such a four-tier wireless network in which a macrocell with a wide coverage area coexists with several picocells and several in-house femtocells  together with a set of device-to-device (D2D) communication links.  In such a multi-tier network, a priority level may also be assigned to each network tier so that admission of users in some tiers is prioritized over that in other tiers. For example, the macro tier, where the base stations (BSs) are installed in a planned manner, may have a higher priority compared to the femto tier where the BSs could be installed in an unplanned manner by the users. Again,  the priority of the D2D tier may be lower than both the macro and pico tiers so that the D2D links do not cause QoS violations of the cellular links. As another example, a cellular network serving cognitive radios  may be considered as a prioritized two-tier network in which the primary radio network (PRN) serving the primary users (PUs) is the high-priority tier while the secondary network or the cognitive radio network (CRN) serving a set of secondary users (SUs) is the low-priority tier. Therefore,  admission of any of the SUs should not cause any QoS violation of any of the PUs. Depending on the operator's perspective,  different number of priority levels may be considered.

Prioritized wireless networks may be employed using either overlay or underlay dynamic spectrum access strategies. In the overlay spectrum access strategy, links are assigned with orthogonal channels (e.g., frequency bands). The channels which are unused by high-priority users are detected and exploited by low-priority users. In the underlay scenario, the entire frequency spectrum is shared by all of the users and thus the admission of each user causes interference to other users. Therefore, the interference caused by low-priority users must be controlled through power control strategies such that high-priority users are protected (i.e., achieve their target signal-to-interference-plus-noise ratios [target-SINRs]). In this paper, we consider an underlay system model wherein all users operate in a single shared channel.

Ideally, it is desirable to satisfy the QoS requirements (e.g., target-SINRs) of all users in the network. However, in an infeasible system, where all users may not be simultaneously supported with their target-SINRs, it is generally desirable to devise a joint power and admission control (JPAC) algorithm that protects the maximum number of users by considering their admission priority levels (e.g., in a prioritized two-tier CRN, the algorithm must protect all PUs, if possible, together with maximum number of admitted SUs). However, finding the maximum feasible set of prioritized users (i.e., the set  with maximum cardinality) is generally an NP-hard problem (\cite{12, andersin_gradual}). It requires an exhaustive search through all possible subsets of prioritized admitted users, leading to an unaffordable computational complexity in large-scale systems. Therefore, the existing algorithms look for sub-optimal solutions. %(for example, see \cite{JPAC_adhoc_convex_relaxation_single_PU, distributed_JPAC_antenna_arrays,xing_dynamic_spectrum, JPAC_SSA1,ISMIRA, LGRA,monemi_ESRPA,femto_tow_tier_ngo_1,femto_tow_tier_ngo_2,hierarchical_tow_tier_guruacharya}). 
For example, in \cite{andersin_gradual,rasti_topc,rasti_DFC,monemi_MTPC,single_tier_2015_lq}, several  JPAC algorithms are proposed to obtain sub-optimal solutions to the problem of finding the maximum feasible set of users in single-tier networks. In \cite{distributed_JPAC_antenna_arrays}, a distributed algorithm is introduced to minimize the total transmit power of primary and secondary links using antenna arrays. In \cite{JPAC_SSA1,ISMIRA,xing_dynamic_spectrum,LGRA,monemi_ESRPA,rasti_error_free,energy_feasibility_trade_off,rasti_distributed_uplink_2015,successiveGP1,Hoang_downlink_centralized_JPAC, JPAC_by_color_graph}, several centralized and distributed JPAC algorithms are proposed for two-tier CR networks to obtain sub-optimal solutions to the problem of finding the maximum feasible set of supported SUs subject to the constraint that all PUs are protected. Considering a higher priority for macrocell users, in \cite{femto_tow_tier_ngo_1,femto_tow_tier_ngo_2,hierarchical_tow_tier_guruacharya}, the problem of finding the maximum feasible set of femtocell  users is investigated in two-tier macrocell-femtocell networks.

There are two major issues related to the existing maximum feasible set JPAC algorithms in the literature. First, the computational complexities of the existing feasibility checking mechanisms  are large, and therefore, may not be suitable for large-scale networks (e.g., dense multi-tier cellular networks). Besides, the existing JPAC algorithms in the literature support a maximum of two priority levels. In this context, the main contributions of our paper can be stated as follows.
	
\begin{itemize}

\item The admission of a subset of users in the network is only possible if the corresponding power vector resulting in the target-SINRs of the admitted users is feasible (i.e., the corresponding transmit power of each link is non-negative and limited to the maximum allowed threshold). Feasibility checking of a given SINR vector is generally the most significant contributor to the complexity of JPAC algorithms in single/multi-tier underlay cellular networks. This is generally accomplished through iterative strategies (e.g., \cite{ISMIRA,rasti_error_free,successiveGP1,tuan_JPAC}) whose complexity is either unpredictable\footnote{The complexity of any iterative algorithm is related to the number of iterations required for that algorithm to converge. However, this is not known in advance or may not easily be calculated for iterative power control algorithms due to the variety of parameters that affect the convergence time of these algorithms. For example, path-gains for all users, noise powers, target-SINR values and even the precision of the convergence error affect the convergence time.}, or of $O(M^3)$, when the well-known relationship between the target SINR vector and the power vector (e.g., \cite{Hoang_downlink_centralized_JPAC, JPAC_by_color_graph,active_link_protection}) is used, where $M$ is the number of users/links. Recently a low-complexity centralized feasibility checking mechanism has been proposed in \cite{monemi_ESRPA}, however the derived relations are only applicable to a cognitive radio network having only one primary and one secondary base station. In this paper, we derive novel relationships between a given SINR vector and its corresponding uplink/downlink power vector, based on which the feasibility checking can be performed with a complexity of  $O(B^3+M B)$, where $B$ is the number of BSs. For many existing cellular networks at high traffic loads, this is considerably smaller than  $O(M^3)$ since each BS serves several users in its coverage area. Therefore, the developed relationships between the SINR and power vector  substantially reduce the complexity of designing JPAC algorithms, and it also provide insights into developing new low-complexity and efficient power update strategies for prioritized multi-tier cellular networks. 
		
\item  While there exist many JPAC algorithms in the literature for non-prioritized single-tier networks  (e.g., \cite{andersin_gradual,rasti_topc,rasti_DFC,monemi_MTPC,single_tier_2015_lq}), prioritized two-tier CRNs (e.g., \cite{distributed_JPAC_antenna_arrays,JPAC_SSA1,ISMIRA,xing_dynamic_spectrum,LGRA,monemi_ESRPA,rasti_error_free,energy_feasibility_trade_off,rasti_distributed_uplink_2015,successiveGP1,Hoang_downlink_centralized_JPAC, JPAC_by_color_graph}), and hierarchical two-tier macrocell-femtocell networks (e.g., \cite{femto_tow_tier_ngo_1,femto_tow_tier_ngo_2,hierarchical_tow_tier_guruacharya}), there exist very few research studies on prioritized multi-tier networks (e.g., see \cite{ekram_twoard_5G}). Based on the obtained  relationships between SINR and power vector (for uplink and downlink communication scenarios), we devise two efficient but low-complexity centralized JPAC algorithms for prioritized multi-tier cellular networks. To the best of our knowledge, these are the first algorithms proposed in the literature for finding the maximum feasible set of users for prioritized multi-tier networks supporting more than two priority levels. Regardless of the number of priority levels, the complexities of our proposed algorithms are far below those of  related existing algorithms. We show through simulations that the performance of our algorithms is superior to that of existing ones in terms of  average outage ratio of low-priority users (i.e., the ratio of the number of low-priority users who have not obtained their desired QoS to the total number of low-priority users). %and  average aggregate throughput of low-priority users in a two-tier prioritized celular network.

\end{itemize}
	
\begin{figure}
\centering
\includegraphics [width=254pt]{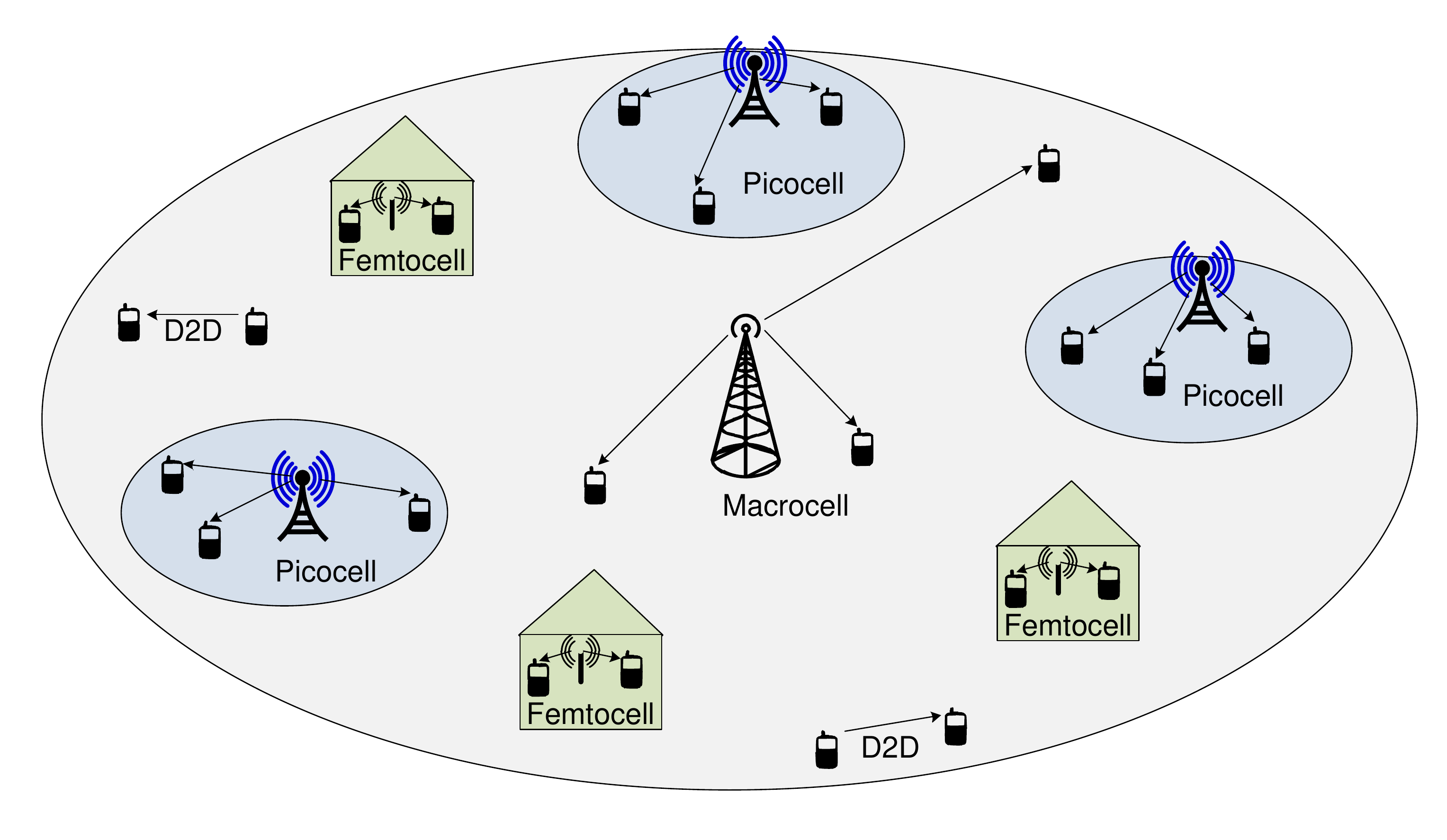} \\
\caption{A four-tier network consisting of a macrocell together with several picocells and femtocells and a set of D2D links.} \vspace{-10pt}
\label{fig:multi_tier}
\end{figure}
	
The rest of this paper is organized as follows. In Section \ref{sec:system_model} we describe the system model. A formal statement of the problem is given in Section \ref{sec:problem_statement}. The novel feasibility checking mechanism is obtained in Section \ref{sec:feasibility_checking} based on which our JPAC algorithms are proposed and studied in Section \ref{sec:proposed_algorithms}. Numerical results and conclusions are finally given in Sections \ref{sec:performance_evaluation_results} and \ref{sec:conclusions}, respectively.
	
\section{System Model and Assumptions}
	\label{sec:system_model}
	Consider underlay uplink/downlink transmission links in a prioritized multi-tier cellular wireless network consisting of $M$ users denoted by $\M=\{1,2,...,M\}$, $B$ BSs denoted by $\B=\{1,2,...,B\}$, $T$ tiers denoted by $\T=\{1,2,...,T\}$, and $K$ admission priority levels denoted by $\K=\{1,2,...,K\}$. Each user $i\in\M$ is assigned with a BS $b_i\in\B$, each BS $b\in\B$ belongs to a tier $t_b\in\T$ and each tier $t\in\T$ has an admission priority $k_t\in\K$ where $k_t=1$ and $k_t=K$ correspond to the case where tier $t$ is with the highest and lowest admission priorities, respectively. We assume that association of users to the BSs is such that it does not violate the backhaul constraints of the BSs. Also, we assume that each user is associated with only one network tier and all the users in a network tier have the same admission priority. In order not to lose the generality, we consider that a subset of tiers may be assigned to the same admission priority level; therefore, we have $K\leq T$, and thus for the case where each tier is assigned a unique admission priority, we have $K=T$. Also, let $\M_m^{\B}$, $\M_n^{\T}$, and $\M_q^{\K}$ denote the subset of users associated with BS $m\in\B$, tier $n\in\T$ and admission priority $q\in\K$, respectively, i.e.,
	\begin{align}
	\label{eq:3231}
		\M_m^{\B}=\{i\in\M |\  b_i=m \},
	\end{align}
	\begin{align}
	\label{eq:3232}
		\M_n^{\T}=\{i\in\M |\  t_m=n,\ \mathrm{where}\ m=b_i  \},
	\end{align}
	and
	\begin{align}
	\label{eq:3233}
		\M_q^{\K}=\{i\in\M |\ k_n=q, \ \mathrm{where}\ n=t_{b_i}  \}.
	\end{align}
	Similarly, let $\B_n^{\T}$  denote the subset of BSs associated with tier $n\in\T$ and $\B_q^{\K}$ denote the subset of BSs whose users have the admission priority $q\in\K$, i.e.,
	\begin{align}
	\label{eq:3234}
		\B_n^{\T}=\{b\in\B |\  t_b=n \},
	\end{align}
	and
	\begin{align}
	\label{eq:3235}
		\B_q^{\K}=\{b\in\B |\ k_{t_b}=q\}.
	\end{align}
	We model the wireless fading environment by large-scale path-loss and shadowing. The channels between different links experience independent fading and the network operates in a slow fading environment. 
	
\subsection{Uplink System Model}
	
	At any given snapshot in time, let $\pup_i$ be the transmit power of user $i$ and assume that $\hup{m}{i}$ denotes the uplink path-gain from user $i$ toward BS $m\in\B$ (e.g., $\hup{b_j}{i}$ is the uplink path-gain between user $i$ and the BS that user $j$ is associated with). The noise at BS $m$ is considered to be zero-mean additive white Gaussian whose power is denoted by $\Nup{m}$. The transmit power $p_i$ is always limited to a maximum value denoted by $\pmax_i$ (i.e., \mbox{$p_i\in[0, \pmax_i]$}). Considering the receivers to be conventional matched filters, for any given uplink transmit power vector $\mathbf{p}$ ($\pbold =\![p_1,p_2,...,p_M]^\mathrm{T}$), {the total uplink interference plus noise caused to user} $i$ {at its receiver is} ${I_i=\sum\limits_{\substack{j\in\M,  j \ne i}}{\!\!\hup{b_i}{j} p_j}+\Nup{b_i}}$ {and thus} the {\em normalized}\footnote{In a general radio transmission scenario (which includes spread-spectrum transmission), the actual SINR for each user is the normalized SINR multiplied by the processing gain.} uplink SINR of user $i$ at its BS denoted by $\gamma_i$ is
	
		\begin{equation}
		\label{eq:1}
			\gamma_i(\pbold)=
				\dfrac	{\hup{b_i}{i} p_i}
						{\sum\limits_{\substack{j\in\M, \\ j \ne i}}{\!\!\hup{b_i}{j} p_j}+\Nup{b_i}}, \ \forall i\in\M.
		\end{equation}
	Given an uplink SINR vector $\gammabold=[\gamma_1,\gamma_2,...,\gamma_M]^{\mathrm{T}}$, we can rewrite \eqref{eq:1} in matrix form and obtain  the corresponding power vector $\pbold$ as
		\begin{equation}
		\label{eq:40}
			\pbold(\gammabold)=\big(\mathbf{I}-\mathbf{F}(\gammabold)\big)^{-1}\mathbf{U}(\gammabold),
		\end{equation}
		where  $\mathbf{I}$ is an $M\times M$ identity matrix, and $\mathbf{U}(\gammabold)$ is an $M \times 1$ column vector wherein $U_i=\frac{\gamma_{i} \Nup{b_{i}}}{\hup{b_{i}}{i}}$ and $\mathbf{F}(\gammabold)$ is $M \times M$ matrix with  $F_{ij}=0$ for any $i=j$ and $F_{ij}=\frac{\gamma_i \hup{b_i}{j}}{\hup{b_i}{i}}$ for any $i \neq j$.
	
	\begin{definition}
		An uplink SINR vector $\gammabold$ is said to be feasible if it belongs to the set of feasible SINR vectors $\mathbf{F}_{\gammabold}$, where
		\begin{align}
		\label{eq:feasible_uplink_sinr}
			\Fbold_{\gammabold}=\{\gammabold | 0\leq p_i(\gammabold) \leq \pmax_i, \ \ \forall i\in\M \}.
		\end{align}
	\end{definition}

\subsection{Downlink System Model}
	
In the downstream communication scenario, at any given  snapshot in time, let $\pdown_i$ be the power level for the signal transmitted toward user $i$ by the BS serving user $i$ and assume that $\hdown{i}{m}$ denotes the downlink path-gain from BS $m\in\B$ toward user $i$ (e.g., $\hdown{i}{b_j}$ is the downlink path-gain between the  BS, that user $j$ is associated with, toward user $i$). 

	{Let }$\Pdown_m$ {be the total transmit power of BS }$m\in\B$, {i.e.},
	\begin{align}
	\label{eq:Pdown_m}
		{\Pdown_m=\sum\limits_{j\in\M_m^{\B}}{\pdown_j}}.
	\end{align}
	The aggregate transmission power of each BS $m\in\B$ is limited to a maximum threshold $\Pmaxdown_m$ (i.e., $\Pdown_m\leq \!\! \Pmaxdown_m, \forall m\in\B$). Consider $\Ndown{i}$ to be the noise power of user $i$, which is assumed to be zero-mean additive white Gaussian.	Considering the receivers to be conventional matched filters, for any given downlink transmit power vector $\pbolddown$ ($\pbolddown =\![\pdown_1,\pdown_2,...,\pdown_M]^\mathrm{T}$), {the total downlink interference plus noise at user $i$ is} ${\widetilde{I}_i=\sum_{m \ne b_i}{\!\!\hdown{i}{m} \Pdown_m}+\Ndown{i}=\sum\limits_{\substack{j\in\M, j \ne i}}{\!\!\hdown{i}{b_j} \pdown_j}+\Ndown{i}}$ and thus the downlink SINR of user $i$ denoted by $\gammadown_i$ is
	
	\begin{equation}
	\label{eq:1down}
		\gammadown_i(\pbolddown)=
			\dfrac	{\hdown{i}{b_i} \pdown_i}
					{\sum\limits_{\substack{j\in\M, \\ j \ne i}}{\!\!\hdown{i}{b_j} \pdown_j}+\Ndown{i}}, \ \forall i\in\M.
	\end{equation}
	For any given downlink SINR vector $\gammabolddown=[\gammadown_1,\gammadown_2,...,\gammadown_M]^{\mathrm{T}}$, the corresponding power vector $\pbolddown$ is obtained from \eqref{eq:1down} as:
	\begin{equation}
	\label{eq:40down}
		\pbolddown(\gammabolddown)=\big(\mathbf{I}-\widetilde{\mathbf{F}}(\gammabolddown)\big)^{-1}\widetilde{\mathbf{U}}(\gammabolddown),
	\end{equation}
	where  $\mathbf{I}$ is an $M\times M$ identity matrix, and $\widetilde{\mathbf{U}}(\gammabolddown)$ is  an $M \times 1$ column vector wherein $\widetilde{U}_i=\frac{\gammadown_{i} \Ndown{i}}{\hdown{i}{b_i}}$, and $\widetilde{\mathbf{F}}(\gammabolddown)$ is an $M \times M$ matrix with  $\widetilde{F}_{ij}=0$ for any $i=j$ and $\widetilde{F}_{ij}=\frac{\gammadown_i \hdown{i}{b_j}}{\hdown{i}{b_i}}$ for any $i \neq j$.
	
	\begin{definition}
		A downlink SINR vector $\gammabolddown$ is said to be feasible if it belongs to the set of feasible SINR vectors $\widetilde{\mathbf{F}}_{\gammabolddown}$, where
		\begin{multline}
		\label{eq:feasible_downlink_sinr}
			\widetilde{\Fbold}_{\gammabolddown}=  \{\gammabolddown | \ \pdown_i(\gammabolddown) \geq 0, \forall i \in\M \ \ \mathrm{and} \\  \sum_{i\in\M_m^{\B}}{\!\!\pdown_i(\gammabolddown)\leq \Pmaxdown_m, \ \forall m\in\B}
			\}.
		\end{multline}
	\end{definition}

\section{Problem Statement}
\label{sec:problem_statement}

In what follows, we first state the problem of uplink JPAC for prioritized multi-tier wireless networks and then we present a similar problem statement for the downlink network model. After stating the problems, we clarify how  the existing solutions lead to relatively high-complexity algorithms due to the complexity of feasibility checking of SINR vectors. Then we show how to reduce the complexity of JPAC algorithms through devising low-complexity feasibility checking mechanisms for uplink and downlink transmission scenarios in cellular networks.

\subsection{Uplink Case}
	
	For a given modulation scheme and a given maximum tolerable bit-error-rate (BER), the minimum allowed data rate for each user $i\in\M$ corresponds to a minimum acceptable SINR for that user (known as the uplink target-SINR denoted by $\gammatarget_i$). For any uplink power vector $\pbold$, the priority constraints force that if some user is provided with its target-SINR, all users having higher priorities are also provided with their target-SINRs. Let $\mathbf{G}_{\pbold}$ be the space of uplink power vectors for which the priority constraints hold, i.e.,
	\begin{multline}
	\label{eq:hp}
		\mathbf{G}_{\pbold} \!=\! \{\pbold | \mathrm{if}\ \exists i \in \! \M_q^{\K}\ \mathrm{for \ any}\ 1 < q \leq  K \ \mathrm{s.t.} \ \gammai(\pbold)\geq\gammatarget_i  %\nonumber
		\\ \mathrm{then} \
		\gammaj(\pbold)\geq\gammatarget_j\
		\mathrm{for\ each\ } q' < q \ \mathrm{and}\  j\in\M_{q'}^{\K}\}.
	\end{multline}	
	Given an uplink power vector $\pbold$, let $\s(\pbold)$ denote the set of users who attain their desired target-SINRs, i.e.,
	\begin{align}
	\label{eq:supported}
		\s(\pbold)\!=\!\left\{i\!\in\!\M|\gamma_i(\pbold)\geq \gammatarget_i\right\}.
	\end{align}	
	For any given target-SINR vector {$\gammatargetbold=[\gammatarget_1,\gammatarget_2,\dots,\gammatarget_M]^{\mathrm{T}}$}, it is desirable to obtain an uplink power vector $\pbold$ (where $0\leq p_i \leq \pmax_i, \forall i \in\M$) %$\pbold\in [\mathbf{0}, \mathbf{\overline{p}}]$ 
	for which the maximum possible number of users are provided with their target-SINRs while all the priority constraints hold as well. Therefore, we formally state the problem of finding the maximum feasible set of users (i.e., the feasible set with maximum cardinality) in uplink prioritized multi-tier wireless networks as the following optimization problem:
	%\begin{subequations}
	%\label{eq:opt1}
	\begin{align}
		\label{eq:opt1}
		%\label{eq:opt1_a}
			\underset{\pbold}{\mathrm{maximize}}& \ \ \ |\s(\pbold)|  \ \ \   
			\nonumber 
		 \\
		%\label{eq:opt1_b}
		 \mathrm{subject\ to} & \ \ \ 0\leq p_i \leq \pmax_i, \ \ \ \forall i\in\M,  
		 \nonumber
	  \\
		%	\label{eq:opt1_c}
		 & \ \ \ \pbold \in \mathbf{G}_{\pbold} ,
	\end{align}
	%\end{subequations}
	where $|.|$ denotes the cardinality of the corresponding vector. The first and second constraints correspond to the \textit{feasibility} of the power vector and \textit{priority} constraints, respectively.
			 
	The optimal solution to the above problem is not generally unique, i.e., there may be many transmit power vectors belonging to the set of solutions of (\ref{eq:opt1}). To show this, consider a system with two active users wherein user 1  is associated with a BS in a higher priority tier compared to that of user 2. Fig. \ref{fig:Region_Example} shows a 2-D space whose $x$ axis and $y$ axis are the transmit power levels of users 1 and 2, respectively. The solid lines are ${f_1(p_1,p_2)=\frac{p_1 h_{11}}{\gammahat_1} -  p_2 h_{12}-\sigma^2_1} $ and ${f_2(p_1,p_2)=\frac{p_2 h_{22}}{\gammahat_2} -  p_1 h_{21}-\sigma^2_2} $. Fig. \ref{fig:Region_Example}(a)  shows the case where both users can be supported with their desired target-SINRs (i.e., for all power vectors $\pbold$ located in the solid-filled region we have ${\gamma_1(\pbold)\geq \gammahat_1}$ and ${\gamma_2(\pbold)\geq \gammahat_2}$, or correspondingly, ${f_1(p_1,p_2)\geq 0}$ and ${f_2(p_1,p_2)\geq 0}$). In Fig. \ref{fig:Region_Example}(b), it is seen that it is not possible to serve both users simultaneously while satisfying their QoS requirements, however since user 1 is of higher priority, all feasible power vectors for which ${\gamma_1(\pbold)\geq \gammahat_1}$ (all points located in the solid-filled region of this figure) correspond to the solution of (\ref{eq:opt1}).
	
	\begin{figure}
			\centering
			\includegraphics [width=254pt]{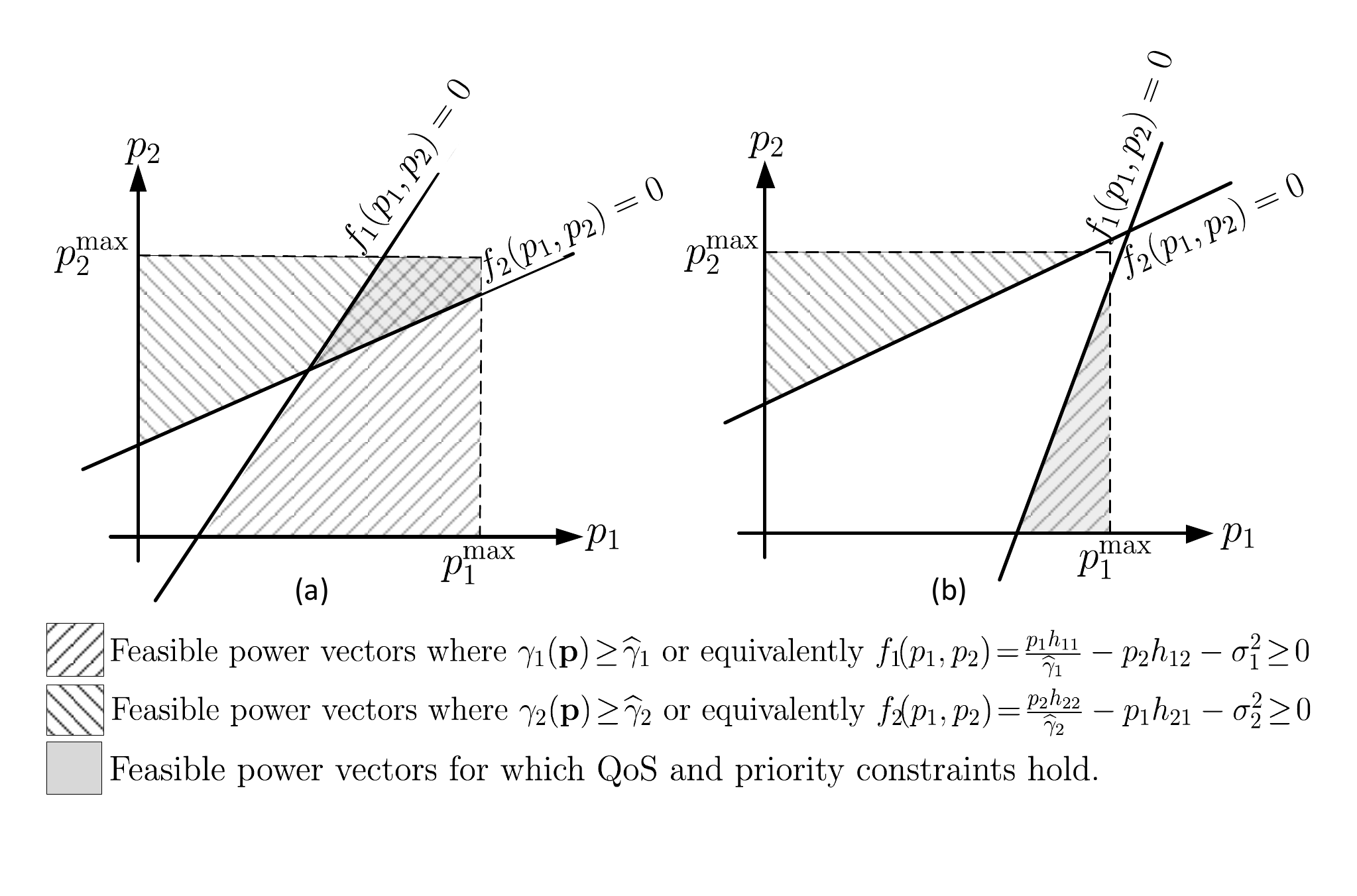}  \vspace{-30pt}
			\caption{System with two users where user 1 is of higher admission priority.} \vspace{-10pt}
			\label{fig:Region_Example}
		\end{figure}
		
	Among all power vectors corresponding to the solution of (\ref{eq:opt1}), we are specially interested in obtaining the ones that correspond to the minimum aggregate transmit powers of the users. Such optimal transmit power vectors ensure that supported users exactly meet their target-SINRs (i.e., equality in \eqref{eq:hp} holds for the second constraint in \eqref{eq:opt1}). To characterize these optimal solutions, the optimization problem in \eqref{eq:opt1} can be transformed into an SINR assignment problem as explained below.

	Consider the SINR assignment problem where each user $i$ may be assigned an SINR $\gammai\in\left\{\gammatarget_i,0\right\}$, where $\gammai=\gammatarget_i$ means that user $i$ is included in the set of active supported users, and $\gammai=0$  means that the potentially non-supported user $i$ is removed (i.e., its transmit power is zero) so as not to cause any extra interference to other users\footnote{The proposed admission control method is for service types for which a minimum acceptable target-SINR (or equivalently, target-data-rate) must be satisfied for all active users in the system.}. Let $\mathbf{T}_{\gammabold}$ be the space of SINR vectors that may be assigned to the users, i.e., 
	\begin{align}
	\label{eq:T_gamma}
		\mathbf{T}_{\gammabold}= \prod_{i\in\M}{\{\gammatarget_i,0\}},
	\end{align}
	and similar to {\eqref{eq:hp}}, let $\mathbf{G}_{\gammabold}$ be the space of SINR vectors for which the priority constraints hold, i.e.,
	\begin{multline}
	\label{eq:hgamma}
		\mathbf{G}_{\gammabold}=\{\gammabold | \mathrm{if}\ \exists i\in \M_q^{\K}\ \mathrm{for \ any}\ 1 < q \leq K \ \mathrm{s.t.} \ \gammai\geq\gammatarget_i  
		\\ \mathrm{then} \
		\gammaj\geq\gammatarget_j\
		\mathrm{for\ each\ } q' < q\ \mathrm{and}\  j\in\M_{q'}^{\K}\}.
	\end{multline}
	Also, let $\A(\gammabold)\subset\M$ be the subset of users assigned with  non-zero SINR values referred to as the admitted users (i.e., $\A(\gammabold)=\{i\in\M | \gamma_i>0\}$).
	We modify the power allocation problem in \eqref{eq:opt1} to the following SINR assignment and admission control problem:	
	\begin{align}
	\label{eq:opt_admission}
		\underset{\gammabold\in \mathbf{T}_{\gammabold}}{\mathrm{maximize}}& \ \ \ |\A(\gammabold)|
					&  \nonumber \\
						\mathrm{subject\ to }&\ \ \ \gammabold\in \mathbf{F}_{\gammabold} \cap \mathbf{G}_{\gammabold},
	\end{align}
	where $\mathbf{F}_{\gammabold}$ and $\mathbf{G}_{\gammabold}$ are given in \eqref{eq:feasible_uplink_sinr} and \eqref{eq:hgamma}, respectively.

	\begin{proposition}
	\label{th:1}
		The SINR vector obtained from \eqref{eq:opt_admission} corresponds to a minimal\footnote{A minimal power vector is the one among feasible power vectors that is not greater (component-wise) than any other feasible power vector.} power vector of the set of solutions of \eqref{eq:opt1}. 
			\end{proposition}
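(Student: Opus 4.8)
The plan is to establish two facts: that the power vector $\pbold^*=\pbold(\gammabold^*)$ induced through \eqref{eq:40} by an optimizer $\gammabold^*$ of \eqref{eq:opt_admission} lies in the solution set of \eqref{eq:opt1}, and that it is \emph{minimal} within that set in the sense of the footnote. The bridge between the two programs is the identity $\gamma_i(\pbold^*)=\gamma_i^*$ for every $i$, which holds by the very construction of \eqref{eq:40}.

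First I would fix an optimal $\gammabold^*$ of \eqref{eq:opt_admission}, put $\pbold^*=\pbold(\gammabold^*)$ and $S=\A(\gammabold^*)=\{i:\gamma_i^*=\gammatarget_i\}$, and record two consequences of $\gammabold^*\in\mathbf{T}_{\gammabold}$. Since $\gamma_i^*=0$ makes $U_i=0$ and annihilates the $i$-th row of $\mathbf{F}(\gammabold^*)$, the $i$-th equation of \eqref{eq:40} collapses to $p_i^*=0$; hence $p_i^*=0$ for $i\notin S$ and the SINRs of the users in $S$ depend only on the powers inside $S$. Moreover $\gamma_i(\pbold^*)=\gammatarget_i$ exactly for $i\in S$, so $\s(\pbold^*)=S$ and $|\s(\pbold^*)|=|\A(\gammabold^*)|$. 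Feasibility $0\le p_i^*\le\pmax_i$ is inherited verbatim from $\gammabold^*\in\mathbf{F}_{\gammabold}$, and because the priority predicate \eqref{eq:hp} is phrased through $\gamma_i(\pbold)$, membership $\gammabold^*\in\mathbf{G}_{\gammabold}$ transfers directly to $\pbold^*\in\mathbf{G}_{\pbold}$. Thus $\pbold^*$ is feasible for \eqref{eq:opt1}.

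Next I would show the two programs attain the same optimal value, which promotes $\pbold^*$ to a solution. The construction above already shows the optimal value of \eqref{eq:opt1} is at least that of \eqref{eq:opt_admission}. For the reverse, I would take any maximizer $\pbold$ of \eqref{eq:opt1} with $S=\s(\pbold)$ and set $\gamma_i=\gammatarget_i$ for $i\in S$ and $\gamma_i=0$ otherwise, so that $\gammabold\in\mathbf{T}_{\gammabold}$ with $\A(\gammabold)=S$; priority membership $\gammabold\in\mathbf{G}_{\gammabold}$ again follows from $\pbold\in\mathbf{G}_{\pbold}$. The one substantive point is $\gammabold\in\mathbf{F}_{\gammabold}$: zeroing the powers of the users outside $S$ and lowering those inside $S$ to meet their targets with equality only reduces the interference seen by each user in $S$, so the corresponding power vector exists, stays componentwise below $\pbold$, and therefore respects the caps $\pmax_i$. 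Equality of the optimal values then forces $|\s(\pbold^*)|$ to be maximal in \eqref{eq:opt1}, so $\pbold^*$ is a solution.

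For minimality, suppose $\pbold'$ is another solution of \eqref{eq:opt1} with $\pbold'\le\pbold^*$ componentwise. Because $p_i^*=0$ for $i\notin S$ and powers are nonnegative, $p_i'=0$ there as well, so no user outside $S$ is supported and $\s(\pbold')\subseteq S$; maximality of $|\s(\pbold')|$ then forces $\s(\pbold')=S$. On the active subsystem indexed by $S$ the vector $\pbold^*$ meets the targets $\gammatarget_i$ with equality while $\pbold'$ satisfies $\gamma_i(\pbold')\ge\gammatarget_i$, so the minimum-power property of the exact-target solution, which rests on the nonnegativity of $\big(\mathbf{I}-\mathbf{F}(\gammabold^*)\big)^{-1}$ on the active set, yields $p_i'\ge p_i^*$ for $i\in S$; combined with $\pbold'\le\pbold^*$ this gives $\pbold'=\pbold^*$. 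I expect the main obstacle to be exactly this monotonicity/minimum-power step: proving that among all power vectors achieving \emph{at least} the target SINRs on a fixed active set the exact-target vector is the componentwise smallest, which must be argued from the Perron--Frobenius structure of $\mathbf{I}-\mathbf{F}(\gammabold^*)$ and the nonnegativity of its inverse on the feasible cone, together with the reduction that discards the zero-power users without disturbing the interference balance on $S$.
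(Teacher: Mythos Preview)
Your argument is correct and complete. The paper itself does not give a proof here; it only points to the analogous Proposition~1 in \cite{monemi_ESRPA} and says the same steps go through. Your write-up therefore supplies considerably more than the paper does: you make explicit the two-way inequality between the optimal values of \eqref{eq:opt1} and \eqref{eq:opt_admission}, and you isolate the one nontrivial ingredient, namely that on a fixed active set $S$ the exact-target power vector is componentwise minimal among all power vectors achieving $\gamma_i\ge\gammatarget_i$ on $S$. Your identification of this step with the nonnegativity of $(\mathbf{I}-\mathbf{F})^{-1}$ (equivalently, with the Neumann series $\sum_k\mathbf{F}^k$ under the feasibility condition $\rho(\mathbf{F})<1$) is exactly right, and it is also what drives the reverse inequality when you pass from an arbitrary maximizer $\pbold$ of \eqref{eq:opt1} to the induced $\gammabold\in\mathbf{T}_{\gammabold}$. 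In short, there is no gap; the only comparison to make is that the paper outsources the argument while you carry it out in full.
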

	\begin{proof}
		It can be proved by taking the steps similar to	those for the proof of Proposition 1 in \cite{monemi_ESRPA}.
	\end{proof}

	Proposition \ref{th:1} states that the optimal SINR vector given by the optimization problem in \eqref{eq:opt_admission} results in the minimum aggregate transmit power required for the maximum feasible set of prioritized users obtained from the optimization problem in \eqref{eq:opt1}.
	
\subsection{Downlink Case}
	Let $\gammatargetdown_i$ be the downlink minimum acceptable SINR (downlink target-SINR) of user $i$.
	Similar to \eqref{eq:opt_admission}, the problem of SINR assignment and admission control for finding the maximum feasible set of prioritized users in multi-tier downlink communication model is stated as
	\begin{align}
	\label{eq:opt_admissiondown}
		\underset{\gammabolddown\in \mathbf{T}_{\gammabolddown}}{\mathrm{maximize}}& \ \ \ |\A(\gammabolddown)|
					&  \nonumber \\
						\mathrm{subject\ to }&\ \ \ \gammabolddown\in \widetilde{\mathbf{F}}_{\gammabolddown} \cap \widetilde{\mathbf{G}}_{\gammabolddown},
	\end{align}
	where $\A(\gammabolddown)=\{i\in\M | \gammadown_i>0\}$ is the set of admitted downlink connections, $\mathbf{T}_{\gammabolddown}=\prod_{i\in\M}{\{\gammatargetdown_i,0\}}$,  $\widetilde{\mathbf{F}}_{\gammabolddown}$ is given by \eqref{eq:feasible_downlink_sinr},  and  $\widetilde{\mathbf{G}}_{\gammabolddown}$ represents the space of downlink SINR vectors for which the priority constraints hold, i.e.,

	\begin{multline}
	\label{eq:hthetadown}
		\widetilde{\mathbf{G}}_{\gammabolddown}=\{\gammabolddown | \mathrm{if}\ \exists i\in \M_q^{\K}\ \mathrm{for \ any}\ 1 < q \leq K \ \mathrm{s.t.} \ \gammadown_i\geq\gammatargetdown_i  
		\\ \mathrm{then} \
		\gammadown_j\geq\gammatargetdown_j\
		\mathrm{for\ each\ } q' < q \ \mathrm{and}\  j\in\M_{q'}^{\K}\}.
	\end{multline}

	In what follows, we construct novel relationships between a power vector and its corresponding SINR vector for devising low-complexity feasibility checking mechanisms for both uplink and downlink communication models. Then, using the derived relationships, we devise novel admission metrics and propose two efficient but low-complexity JPAC algorithms to solve each of the problems in \eqref{eq:opt_admission} and \eqref{eq:opt_admissiondown} for prioritized multi-tier cellular wireless networks for both uplink and downlink communications.
	
\section{Low-Complexity Feasibility Checking Mechanism}
\label{sec:feasibility_checking}
	
	The problem of  finding the maximum feasible set of users for admission control with/without the priority constraints is generally NP-hard  (\cite{andersin_gradual,JPAC_adhoc_convex_relaxation_single_PU,p_norm_proof_for_NP_Hardness}), and needs exhaustive search through all possible SINR vectors $\gammabold\in \mathbf{T}_{\gammabold}$. Therefore, practical JPAC algorithms generally obtain a sub-optimal solution of the maximum feasible set problem by iteratively removing the unsupported users according to some admission metric and checking the feasibility of the remaining set of users at each iteration until the system becomes feasible. The process of checking the feasibility of admitted users at each iteration is generally the most significant contributor to the complexity of all such algorithms. This is mostly accomplished by checking the feasibility of the power vector corresponding to the desired SINR vector through (\ref{eq:40}) or (\ref{eq:40down}) which needs matrix inversion with the complexity of $O(M^3)$. Another way is to obtain the equilibrium power vector from the iterative constrained target-SINR tracking power control (TPC) algorithm proposed in \cite{foschini_base_TPC,grandhi_base_TPC,grandhi_zander_base_constrained_TPC} whose complexity and speed of convergence are not known in advance and depends on the network parameters.
		
	In what follows, for the uplink scenario, wherein the transmit power of each user is limited to a maximum threshold, we first obtain a novel and low-complexity relationship that maps the uplink SINRs of the users to the corresponding transmit powers of the users based on which a low-complexity feasibility checking mechanism for cellular networks is devised for the uplink system model.
	On the other hand, the downlink total transmit power of each BS must be limited. Therefore, for the downlink scenario, we obtain a similar novel relationship between the downlink SINRs of the users and the total downlink transmit powers of the BSs. The derived relations can be employed in the JPAC algorithms instead of the feasibility checking mechanisms in (\ref{eq:40}) or (\ref{eq:40down}) to  decrease the complexity of such algorithms substantially.

\subsection{Uplink Low-Complexity Feasibility Checking}
\vspace{0.2cm}
	\begin{proposition}
	\label{prop:2}
		Given an uplink SINR vector $\gammabold=[\gamma_1, \gamma_2, ..., \gamma_{M}]^{\mathrm{T}}$,  
				the corresponding power vector $\pbold$ that results in $\gammabold$ %(or equivalently in $\thetabold$)
		 is obtained from 
		\begin{align}
		\label{eq:p_i}
			p_i= \teta{i}\dfrac{\Phi_{b_i}}{\hup{b_i}{i}} , \ \ \ \ \ \forall i\in\M,
		\end{align}
		where $\Phibold=[\Phi_{1}, \Phi_{2}, ..., \Phi_{B}]^{\mathrm{T}}$ is obtained as
		\begin{align}
		\label{eq:phi}
			\Phibold(\gammabold)=\left(\mathbf{I-H(\gammabold)}\right)^{-1} \mathbf{N},
		\end{align}
		in which $\mathbf{I}$ is a $B\times B$ identity matrix, $\mathbf{N}=[N_{1}, N_{2}, ..., N_{B}]^{\mathrm{T}}$, and $\mathbf{H}$ is a $B\times B$ matrix whose elements are obtained by
		\begin{align}
		\label{eq:H_elements}
			H_{mn}=
			\begin{cases}
			\sum\limits_{i\in\M_m^{\B}}{\!\! \teta{i}}, & \mathrm{if\ } m=n, \\
			\sum\limits_{i\in\M_n^{\B}}{\!\! \frac{\hup{m}{i}}{\hup{n}{i}} \teta{i}}, & \mathrm{if\ } m\neq n.
			\end{cases}
		\end{align}
	\end{proposition}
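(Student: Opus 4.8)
The plan is to give the auxiliary quantity $\Phi_m$ a concrete physical meaning: I will interpret it as the \emph{total} received power at BS $m$, namely the aggregate of every user's signal together with the noise, $\Phi_m = \sum_{j\in\M} \hup{m}{j} p_j + \Nup{m}$. With this reading, the denominator of the SINR in \eqref{eq:1} is exactly $\Phi_{b_i} - \hup{b_i}{i} p_i$, since the interference-plus-noise seen by user $i$ is the total received power at its serving BS minus its own signal contribution. First I would rewrite \eqref{eq:1} as $\gammai\big(\Phi_{b_i} - \hup{b_i}{i} p_i\big) = \hup{b_i}{i} p_i$ and solve this linear equation for $\hup{b_i}{i} p_i$, which gives $\hup{b_i}{i} p_i = \teta{i}\,\Phi_{b_i}$ and hence \eqref{eq:p_i} after dividing by $\hup{b_i}{i}$. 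This step is purely algebraic and establishes the first displayed identity directly from the definition of $\Phi_m$.

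The second step is to close the loop by substituting \eqref{eq:p_i} back into the definition of $\Phi_m$, which yields $\Phi_m = \Nup{m} + \sum_{j\in\M} \frac{\hup{m}{j}}{\hup{b_j}{j}}\,\teta{j}\,\Phi_{b_j}$. The crucial manipulation is to reorganize the sum over users into a sum over base stations: I would partition $\M$ into the classes $\M_n^{\B}$, $n\in\B$, and note that every $j\in\M_n^{\B}$ has $b_j=n$, so $\hup{b_j}{j}=\hup{n}{j}$ and $\Phi_{b_j}=\Phi_n$. Pulling the common factor $\Phi_n$ out of the inner sum turns the expression into $\Phi_m = \Nup{m} + \sum_{n\in\B}\Phi_n \sum_{j\in\M_n^{\B}} \frac{\hup{m}{j}}{\hup{n}{j}}\,\teta{j}$, and the inner coefficient is recognized as $H_{mn}$ from \eqref{eq:H_elements}; the diagonal branch $m=n$ collapses to $\sum_{i\in\M_m^{\B}}\teta{i}$ because $\hup{m}{j}/\hup{m}{j}=1$. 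Stacking these $B$ scalar equations gives $(\mathbf{I}-\mathbf{H})\Phibold = \mathbf{N}$, and inverting yields \eqref{eq:phi}.

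I expect the only real obstacle to be the index bookkeeping in the regrouping step: keeping straight that the summation label in \eqref{eq:H_elements} ranges over users associated with the \emph{column} index $n$ (the BS whose power $\Phi_n$ the coefficient multiplies), while the path-gain numerator carries the \emph{row} index $m$, and verifying that both branches of $H_{mn}$ emerge from the single expression $\frac{\hup{m}{j}}{\hup{n}{j}}\teta{j}$. It is also worth remarking, so that the inversion in \eqref{eq:phi} is well defined, that $\mathbf{I}-\mathbf{H}$ is invertible exactly when $\gammabold$ is feasible; this mirrors the nonsingularity of $\mathbf{I}-\mathbf{F}$ underlying \eqref{eq:40}, so no argument beyond the standard Perron--Frobenius reasoning already implicit for the $M\times M$ system is required.
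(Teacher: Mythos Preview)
Your proposal is correct and follows essentially the same route as the paper: interpret $\Phi_m$ as the total received power plus noise at BS $m$, solve the SINR relation for $p_i$ in terms of $\Phi_{b_i}$, then substitute back into the definition of $\Phi_m$ and regroup the user sum by serving base station to obtain the $B\times B$ linear system. Your organization is marginally more direct than the paper's (which first isolates the diagonal case via an intermediate identity before reassembling the off-diagonal terms), but the underlying argument is identical.
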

	\begin{proof}
		See \textbf{Appendix {\ref{apx:1}}}.
	\end{proof}
	
	\begin{corollary}
		An uplink SINR vector $\gammabold$ %(or its corresponding uplink effective SINR vector $\thetabold$) 
		is feasible if and only if
		\begin{align}
		\label{eq:fisible_phi}
			0 \leq \Phi_m(\gammabold) \leq   \Phi^{\mathrm{max}}_m(\gammabold) , \ \ \forall m\in\B,
		\end{align}
		where $\Phi_m(\gammabold)$ is obtained from \eqref{eq:phi} and $\Phi^{\mathrm{max}}_m(\gammabold)= \min\limits_{i\in\M_m^{\B}} {\{ \frac{p_i^{\mathrm{max} } \hup{m}{i}(\gamma_i + 1) }  {\gamma_i} \} }$.
	\end{corollary}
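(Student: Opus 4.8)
The plan is to reduce the per-BS interval condition \eqref{eq:fisible_phi} to the per-user feasibility condition $0 \leq p_i(\gammabold) \leq \pmax_i$ of the Definition, using the explicit map \eqref{eq:p_i} supplied by Proposition \ref{prop:2}. The key structural observation is that every user $i \in \M_m^{\B}$ shares the \emph{same} auxiliary variable $\Phi_{b_i} = \Phi_m$, so the $|\M_m^{\B}|$ individual constraints on the powers $p_i$ can be aggregated into a single interval constraint on the scalar $\Phi_m$. I would therefore prove both implications of the ``if and only if'' by translating each bound on $p_i$ into a bound on $\Phi_m$ and then intersecting over the users served by $m$.

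First I would substitute \eqref{eq:p_i}, namely $p_i = \teta{i}\,\Phi_{b_i}/\hup{b_i}{i}$, into the feasibility requirement and treat the two bounds separately. For the lower bound, since $\teta{i} \geq 0$ and the path gains satisfy $\hup{b_i}{i} > 0$, the sign of $p_i$ coincides with that of $\Phi_{b_i}$ whenever $\gamma_i > 0$; hence requiring $p_i \geq 0$ for the users served by BS $m$ reduces to $\Phi_m \geq 0$. For the upper bound, for any user $i$ with $\gamma_i > 0$ the inequality $p_i \leq \pmax_i$ rearranges to $\Phi_m \leq \pmax_i \hup{m}{i}(\gamma_i+1)/\gamma_i$; imposing this simultaneously for every $i \in \M_m^{\B}$ is equivalent to demanding that $\Phi_m$ be at most the tightest such bound, which is precisely $\Phi^{\mathrm{max}}_m(\gammabold) = \min_{i\in\M_m^{\B}}\{\pmax_i \hup{m}{i}(\gamma_i+1)/\gamma_i\}$.

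Collecting both bounds over all $m \in \B$ then yields the claimed equivalence, and I would state the two directions explicitly: the forward direction (feasibility $\Rightarrow$ \eqref{eq:fisible_phi}) by reading off the bounds on each $\Phi_m$ from the feasible powers of the users BS $m$ serves, and the converse by verifying that $0 \leq \Phi_m \leq \Phi^{\mathrm{max}}_m$ forces $0 \leq p_i \leq \pmax_i$ for each $i$ through \eqref{eq:p_i}, since $\Phi^{\mathrm{max}}_m$ is, by construction, no larger than the individual bound attached to any single user at that BS.

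The one point requiring care, and the only genuine obstacle, is the treatment of users with $\gamma_i = 0$, for which \eqref{eq:p_i} gives $p_i = 0$ independently of $\Phi_m$ and the ratio $\pmax_i \hup{m}{i}(\gamma_i+1)/\gamma_i$ is undefined. I would dispose of this by observing that such users automatically satisfy $0 \leq p_i \leq \pmax_i$ and contribute a vacuous ($+\infty$) term to the minimum defining $\Phi^{\mathrm{max}}_m$, so they neither constrain nor are constrained by \eqref{eq:fisible_phi}; the equivalence is thus governed entirely by the users with strictly positive SINR, for which all the divisions above are legitimate.
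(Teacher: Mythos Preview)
Your argument is correct and is exactly the natural derivation the paper has in mind. In fact, the paper does not supply an explicit proof of this corollary at all; it is stated immediately after Proposition~\ref{prop:2} and treated as a direct consequence of the relation \eqref{eq:p_i}. Your reduction of the per-user bounds $0\le p_i\le \pmax_i$ to the single interval $0\le \Phi_m\le \Phi^{\mathrm{max}}_m$ via \eqref{eq:p_i}, together with the handling of the degenerate $\gamma_i=0$ case, is precisely that implied argument made explicit.
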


\subsection{Downlink Low-Complexity Feasibility Checking}

	\begin{proposition}
	\label{prop:3}
		Given a downlink SINR vector  $\gammabolddown=[\gammadown_1, \gammadown_2, ..., \gammadown_{M}]^{\mathrm{T}}$, 
		the corresponding power vector $\pbolddown$ that results in $\gammabolddown$ %(or equivalently in $\thetabolddown $) 
		is obtained from 
		\begin{align}
		\label{eq:p_idown}
			\pdown_i=  \frac{1}{\hdown{i}{b_i}} \thetadown{i} \times \left( \sum\limits_{n\in\B}{ \hdown{i}{n}\Pdown_n } + \Ndown{i} \right),
		\end{align}
		where $\Pdownbold=[\Pdown_{1}, \Pdown_{2}, ..., \Pdown_{B}]^{\mathrm{T}}$ is obtained as
		\begin{align}
		\label{eq:Pdown}
			\Pdownbold(\gammabolddown)=\left(\mathbf{I-\widetilde{H}(\gammabolddown)}\right)^{-1} \widetilde{\mathbf{N}}^{*},
		\end{align}
		in which $\mathbf{I}$ is a $B\times B$ identity matrix, $\widetilde{\mathbf{N}}^{*}=[\Ndown{1}^{*}, \Ndown{2}^{*}, ..., \Ndown{B}^{*}]^{\mathrm{T}}$, where
		\begin{align}
			\Ndown{m}^{*}=\sum\limits_{i\in\M_m^{\B}}{\dfrac{1}{\hdown{i}{m}}\thetadown{i} \Ndown{i} },
		\end{align}
		and $\widetilde{\mathbf{H}}$ is a $B\times B$ matrix whose elements are obtained from
		\begin{align}
		\label{eq:H_elements_down}
			\widetilde{H}_{mn}=
			\begin{cases}
			\sum\limits_{i\in\M_m^{\B}}{\!\! \thetadown{i}}, & \mathrm{if\ } m=n, \\
			\sum\limits_{i\in\M_n^{\B}}{\!\! \frac{\hdown{i}{n}}{\hdown{i}{m}} \thetadown{i}}, & \mathrm{if\ } m\neq n.
			\end{cases}
		\end{align}
	\end{proposition}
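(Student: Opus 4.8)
The plan is to mirror the derivation that establishes the uplink relations in Proposition~\ref{prop:2}, exploiting the structural symmetry between the two scenarios. The key auxiliary quantity is the \emph{total} power that user $i$ collects from all BSs together with its noise,
\[
R_i \;=\; \sum_{n\in\B}\hdown{i}{n}\Pdown_n + \Ndown{i}.
\]
Since every user is homed at exactly one BS, expanding $\Pdown_n=\sum_{j\in\M_n^{\B}}\pdown_j$ and regrouping the double sum by the serving BS of each user gives $\sum_{n\in\B}\hdown{i}{n}\Pdown_n=\sum_{j\in\M}\hdown{i}{b_j}\pdown_j$. Separating the $j=i$ term then shows that $R_i$ is exactly the desired signal plus the interference-plus-noise of \eqref{eq:1down}, i.e. $R_i=\hdown{i}{b_i}\pdown_i+\widetilde{I}_i$ with $\widetilde{I}_i=\sum_{j\ne i}\hdown{i}{b_j}\pdown_j+\Ndown{i}$.

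First I would use \eqref{eq:1down} in the form $\widetilde{I}_i=\hdown{i}{b_i}\pdown_i/\gammadown_i$ and substitute it into $R_i=\hdown{i}{b_i}\pdown_i+\widetilde{I}_i$, which yields $R_i=\hdown{i}{b_i}\pdown_i\,(\gammadown_i+1)/\gammadown_i$. Solving for $\pdown_i$ and recognising $\gammadown_i/(\gammadown_i+1)=\thetadown{i}$ reproduces \eqref{eq:p_idown} verbatim. This portion is purely algebraic and is the exact dual of the manipulation that leads to \eqref{eq:p_i}; the only conceptual difference is that the aggregate variable is indexed by the receiving user here, whereas in the uplink the aggregate $\Phi_m$ is indexed by the receiving BS.

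Next I would close the system at the BS level. Summing \eqref{eq:p_idown} over the users $i\in\M_m^{\B}$ served by BS $m$ (for which $b_i=m$, hence $\hdown{i}{b_i}=\hdown{i}{m}$) and using $\Pdown_m=\sum_{i\in\M_m^{\B}}\pdown_i$ gives
\[
\Pdown_m=\sum_{i\in\M_m^{\B}}\frac{\thetadown{i}}{\hdown{i}{m}}\Big(\sum_{n\in\B}\hdown{i}{n}\Pdown_n+\Ndown{i}\Big).
\]
Interchanging the two finite sums and reading off the coefficient of each $\Pdown_n$ converts this identity into the linear fixed point $\Pdown_m=\sum_{n\in\B}\widetilde{H}_{mn}\Pdown_n+\Ndown{m}^{*}$: the diagonal coefficient collapses to $\sum_{i\in\M_m^{\B}}\thetadown{i}$ because $\hdown{i}{n}/\hdown{i}{m}=1$ when $n=m$, the cross terms are the gain-ratio-weighted sums of $\thetadown{i}$ recorded in \eqref{eq:H_elements_down}, and the noise-dependent remainder is exactly $\Ndown{m}^{*}$. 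Collecting the $B$ equations into $\Pdownbold=\widetilde{\mathbf{H}}\Pdownbold+\widetilde{\mathbf{N}}^{*}$ and rearranging yields \eqref{eq:Pdown}. I expect the main obstacle to lie in this aggregation step's index bookkeeping rather than in any deep inequality: one must keep the outer sum ranging over the users homed at BS $m$ while the inner sum ranges over all interfering BSs $n$, so that the $M$-dimensional per-user description collapses to the $B\times B$ system --- precisely the collapse responsible for the claimed complexity reduction. The invertibility of $\mathbf{I}-\widetilde{\mathbf{H}}$ needed to pass to \eqref{eq:Pdown} follows as in the uplink case, since any power vector realising a feasible $\gammabolddown$ makes the nonnegative system $\Pdownbold=\widetilde{\mathbf{H}}\Pdownbold+\widetilde{\mathbf{N}}^{*}$ admit a unique nonnegative solution, forcing the spectral radius of $\widetilde{\mathbf{H}}$ to be below one.
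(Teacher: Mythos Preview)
Your proof is correct and follows essentially the same route as the paper's Appendix~\ref{apx:2}: introduce the total received power plus noise at user $i$ (the paper calls it $\Phidown_i$, you call it $R_i$), use the SINR definition to express $\pdown_i$ in terms of this aggregate, rewrite the aggregate via the per-BS totals $\Pdown_n$, and then sum over $i\in\M_m^{\B}$ to collapse the system to $B$ linear equations. Your closing remark on the invertibility of $\mathbf{I}-\widetilde{\mathbf{H}}$ is a small addition the paper leaves implicit.
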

	\begin{proof}
		See \textbf{Appendix {\ref{apx:2}}}.
	\end{proof}

	\begin{corollary}
		A downlink SINR vector $\gammabolddown$ %(or its corresponding downlink effective SINR vector $\thetabolddown$)
		is feasible if and only if
		\begin{align}
		\label{eq:fisible_simple_down}
			0 \leq \Pdown_m(\gammabold)\leq \Pdown_m^{\mathrm{max}}, \ \ \ \forall m\in\B,
		\end{align}
		where $\Pdown_m(\gammabold)$ is obtained from \eqref{eq:Pdown}. 
	\end{corollary}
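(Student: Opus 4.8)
The plan is to prove the two-sided bound \eqref{eq:fisible_simple_down} by unpacking the definition of downlink feasibility in \eqref{eq:feasible_downlink_sinr} and matching each of its two requirements against the two inequalities in the corollary, using the explicit expressions from Proposition \ref{prop:3}. Recall that $\gammabolddown$ is feasible precisely when (i) every individual power $\pdown_i(\gammabolddown)$ is non-negative, and (ii) every aggregate BS power $\Pdown_m(\gammabolddown)=\sum_{i\in\M_m^{\B}}\pdown_i(\gammabolddown)$ does not exceed $\Pmaxdown_m$. By Proposition \ref{prop:3} the aggregate powers are exactly the entries of the vector $\Pdownbold(\gammabolddown)$ computed in \eqref{eq:Pdown}, so requirement (ii) is literally the upper inequality $\Pdown_m(\gammabolddown)\leq\Pmaxdown_m$. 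The whole content of the corollary therefore reduces to showing that requirement (i), the user-level non-negativity, is equivalent to the lower inequality $\Pdown_m(\gammabolddown)\geq 0$ at the BS level.

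For the forward direction I would assume $\gammabolddown$ is feasible and deduce the bounds. Requirement (ii) gives the upper bounds immediately. For the lower bounds, since each $\Pdown_m$ is the sum of the non-negative quantities $\pdown_i$, $i\in\M_m^{\B}$ (from requirement (i) together with \eqref{eq:Pdown_m}), every $\Pdown_m\geq 0$, so the chain $0\leq\Pdown_m\leq\Pmaxdown_m$ holds.

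The reverse direction is where the real work lies, and it is exactly the step that makes BS-level checking sufficient. Assuming $0\leq\Pdown_m(\gammabolddown)\leq\Pmaxdown_m$ for every $m\in\B$, the upper half already yields requirement (ii). For requirement (i) I would invoke the explicit formula \eqref{eq:p_idown}, which writes each user power as $\pdown_i=\frac{1}{\hdown{i}{b_i}}\thetadown{i}\big(\sum_{n\in\B}\hdown{i}{n}\Pdown_n+\Ndown{i}\big)$. Here the leading factor $\frac{1}{\hdown{i}{b_i}}$ is positive, $\thetadown{i}=\gammadown_i/(\gammadown_i+1)\geq 0$ since $\gammadown_i\geq 0$, every path-gain $\hdown{i}{n}$ is non-negative, and the noise $\Ndown{i}$ is positive. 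Hence under the hypothesis $\Pdown_n\geq 0$ the bracketed term is a non-negative combination of the $\Pdown_n$ plus a positive constant, forcing $\pdown_i\geq 0$ for every $i$. This establishes requirement (i) and completes the equivalence.

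The key obstacle, and the crux of the result, is precisely this last implication: a priori the individual $\pdown_i$ could be negative even when their per-BS sums $\Pdown_m$ are non-negative, so one needs the specific non-negative, coupling-through-the-BSs structure of \eqref{eq:p_idown} to rule this out. It is this structure, inherited from the derivation of Proposition \ref{prop:3}, that lets a test on the $B$ aggregate powers certify non-negativity of all $M$ user powers, which is the whole point of the low-complexity reformulation.
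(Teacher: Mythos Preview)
Your argument is correct and matches what the paper has in mind: the corollary is stated immediately after Proposition~\ref{prop:3} with no proof, so the authors regard it as an immediate consequence of \eqref{eq:p_idown}--\eqref{eq:Pdown}. Your write-up supplies exactly the missing details, and in particular correctly identifies that the only nontrivial implication is that BS-level non-negativity $\Pdown_n\geq 0$ forces user-level non-negativity $\pdown_i\geq 0$ via the non-negative structure of \eqref{eq:p_idown}.
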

	
	\begin{remark}
		The computational complexities of calculating $(\mathbf{I-H(\gammabold)})$ in \eqref{eq:phi} from \eqref{eq:H_elements} and  $(\mathbf{I-\widetilde{H}(\gammabolddown)})$ in \eqref{eq:Pdown} from \eqref{eq:H_elements_down} are both of $O(M\!\times\!B)$. Thus, together with the complexity of the inversion of the corresponding matrices, the overall complexity of feasibility checking of a given uplink or downlink SINR vector through \eqref{eq:fisible_phi} or \eqref{eq:fisible_simple_down}, respectively, is only of $O(B^3+M\!\times\! B)$. This is significantly smaller compared to the complexity of the traditional feasibility checking mechanisms through \eqref{eq:40} or \eqref{eq:40down} which is of $O(M^3)$. This is because,  the number of users ($M$) is generally  much higher than the number of BSs ($B$). Considering the fact that feasibility checking is generally the most significant contributor to the complexity of JPAC algorithms in traditional non-prioritized single-tier or prioritized multi-tier cellular networks, by applying our proposed feasibility checking mechanism,  the complexities of  the existing JPAC algorithms can be significantly reduced. Also, the proposed mechanism provides insights into developing efficient JPAC algorithms for multi-tier wireless networks.
	\end{remark}

\section{Proposed JPAC Algorithms and their Properties}
	\label{sec:proposed_algorithms}

	In what follows, based on the low-complexity SINR-to-power relations obtained in the previous section, we propose two low-complexity and efficient JPAC algorithms, namely, the \textbf{M}ulti-tier \textbf{E}ffective \textbf{S}tepwise user removal with high-priority user \textbf{P}rotection \textbf{A}lgorithm (MESPA), and \textbf{M}ulti-tier \textbf{L}ow-complexity \textbf{S}tepwise user removal with high-priority user \textbf{P}rotection \textbf{A}lgorithm (MLSPA). The properties of the proposed algorithms are also investigated. With the objective of maximizing the number of supported users, the key idea in both algorithms is to devise low-complexity removal metrics based on the obtained SINR-to-power relations. The removal metrics are then employed in the proposed JPAC algorithms by iteratively finding and removing the candidates according to the removal metrics and checking the feasibility of the system through the proposed low-complexity feasibility checking mechanisms until the system becomes feasible. In finding the removal candidates, we first find the BS with maximum infeasibility measure and then, among the users served by that BS, we find the user whose removal results in the maximal decrease in the infeasibility measure and approximal infeasibility measure of that BS in the MESPA and MLSPA, respectively, as formally described in the following subsection.
	
\subsection{Proposed JPAC Algorithms}
	In what follows, we first present the MESPA and then we revise the admission metric of MESPA to form a new admission metric with lower complexity and propose the MLSPA. The lower complexity of MLSPA leads to a slight degradation in network performance compared to that for MESPA as will be shown in the numerical results.
	\subsubsection{MESPA}
	Assume that all users in the network are initially admitted and assigned with their target-SINRs (i.e., $\A\leftarrow \M$ and $\gamma_i=\gammatarget_i$ for the uplink and $\gammadown_i=\gammatargetdown_i$ for the downlink communication scenarios for all $i\in\M$).  Simultaneous admission of all users may result in the infeasibility of the system. For such a scenario, we  devise a low-complexity iterative strategy through which at each step, some users are removed (aiming at a small number of removals) from the set of active users until the remaining set of active users results in a feasible prioritized system. The process of determining the removal candidate for an infeasible system at each step is accomplished in two phases: determination of the candidate BS/BSs from which some users have to be removed and then finding the removal candidate among the users associated with the corresponding BS/BSs.
	
	The BS/BSs from which some users have to be removed is/are determined as the one/ones with the lowest priority which has/have at-least one active user. Suppose that BSs $\B_q^{\K}\subset\B$ with the priority level $q$ have been chosen as such BSs. In what follows we describe how the removal candidate is determined from $\M_q^{\K}$ where $\M_q^{\K}\subset\M$ is the subset of users having the admission priority of $q\in\K$ associated with BSs $\B_q^{\K}$.  
	
	Let $\gammabold_{\!\A}$ and $\gammabolddown_{\!\A}$ denote the SINRs of the users when all admitted users (any user $i\in\A$) are assigned with their target-SINRs and other removed users are assigned with zero SINR in the uplink and downlink communication scenarios, respectively. Also, assume that $\gammabold_{\!\A}$ and $\gammabolddown_{\!\A}$ are infeasible SINR vectors. This means that \eqref{eq:fisible_phi} and \eqref{eq:fisible_simple_down} do not hold for at least one BS $m\in\B$ in the uplink and downlink scenarios, respectively. Let ${n}^{*}\in\B$ and $\widetilde{n}^{*}\in\B$ be the BSs with maximum infeasibility measure for the uplink and downlink communications, respectively, defined as follows:
	\begin{align}
	\label{eq:ncandid}
		n^{*}\!\! = \!
		\begin{cases}
		 	\!\!\!\!\!\!\!\!\!\!\!\!
		 	\argmax\limits_{\ \ \ \ \ \ n\in\B | \Phi_n\!(\gammabold_{\!\A})<0}{ \!\!\!\!\!\!\!\!\!\!\! \{  \Phi_n(\gammabold_{\!\A})    \} }, 
		 	\hspace{20pt}
		 	\mathrm{if}\ \exists  n \! \in \! \B \ \mathrm{s.t.}\ \Phi_n(\gammabold_{\!\A}) \! < \! 0
		 	 \\
		 	\hspace{-46pt}
		 	\argmax\limits_{\ \ \ \ \ \ \ \ \ \ \ \ \ n\in\B | \Phi_n\!(\gammabold_{\!\A})>\Phi_n^{\mathrm{max}}(\gammabold)} {\!\!\!\!\!\!\!\!\!\!\!\!\!\!\!\!\!\!\!\!\!\!\!\!\! \{ \Phi_n(\gammabold_{\!\A}) \! - \! { {\Phi}_n^{\mathrm{max}}}(\gammabold)  \} }, 
		 	\hspace{20pt}
		 	\mathrm{otherwise},
		\end{cases}
	\end{align}
	and 
	\begin{align}
	\label{eq:ncandid_down}
		\widetilde{n}^{*}\! = \!
		\begin{cases}
			\!\!\!\!\!\!\!\!\!\!\!\!
		 	\argmax\limits_{\ \ \ \ \ \ n\in\B | \Pdown_n\!(\gammabolddown_{\!\A})<0}{ \!\!\!\!\!\!\!\!\!\!\! \{  \Pdown_n(\gammabolddown_{\!\A})    \} }, 
		 	\hspace{20pt}
		 	\mathrm{if}\ \exists  n \! \in \! \B \ \mathrm{s.t.}\ \Pdown_n(\gammabolddown_{\!\A}) \! < \! 0
		 	 \\
		 	\!\!\!\!\!\!\!\!\!\!\!\!\!\!\!\!\!\!\!\!\!\!
		 	\argmax\limits_{\ \ \ \ \ \ \ \ \ \ \ n\in\B | \Pdown_n\!(\gammabolddown_{\!\A})>\Pdown_n^{\mathrm{max}}} {\!\!\!\!\!\!\!\!\!\!\!\!\!\!\!\!\!\!\!\! \{ \Pdown_n(\gammabolddown_{\!\A}) \! - \! { {\Pdown}_n^{\mathrm{max}}}  \} }, 
		 	\hspace{20pt}
		 	\mathrm{otherwise}.
		\end{cases}
	\end{align}
	Note that $n^{*}$ and $\widetilde{n}^{*}$ in \eqref{eq:ncandid} and \eqref{eq:ncandid_down}, respectively, are first selected from the BSs for whom the lower-bound feasibility condition is violated (i.e., BSs $m$ for which \mbox{$\Phi_m<0$} or \mbox{$\Pdown_m<0$}) and therefore BSs with lower-bound infeasibility are supposed to have more degree of infeasibility as compared with those with upper-bound infeasibility (i.e., BSs $m$ for which \mbox{$\Phi_m>\Phi_m^{\mathrm{max}}$} or \mbox{$\Pdown_m>\Pdown_m^{\mathrm{max}}$}). This is because, for a given BS $m\in\B$ with lower-bound infeasibility, successive removals of active users first leads to the lower-bound feasibility (\mbox{$\Phi_m \geq 0$} or \mbox{$\Pdown_m \geq 0$}) and then results in the upper-bound feasibility (\mbox{$\Phi_m\leq\Phi_m^{\mathrm{max}}$} or \mbox{$\Pdown_m\leq\Pdown_m^{\mathrm{max}}$}).
		
	At each removal iteration, we may determine the removal candidate $i^{*}\in\M_q^{\K}\cap \A$ or $\widetilde{i}^{*}\in\M_q^{\K}\cap\A$ as the one whose removal leads to the maximal decrease in the infeasibility measure of the BSs ${n}^{*}\in\B$ or $\widetilde{n}^{*}\in\B$ in the uplink and downlink communications, respectively, i.e.,
	
	\begin{align}
	\label{eq:removal_candidate_up}
		i^{*}\!\! = \!\!
		\begin{cases}
			\hspace{-106pt}
			 	\argmin\limits_{\hspace{110pt} i\in\M_q^{\K}\cap \A | \Phi_{n^{\!*}}\!(\gammabold_{\!\A\setminus\{i\}}\!)>
			 	\Phi_{n^{\!*}}^{\mathrm{max}}(\gammabold_{\!\A\setminus\{i\}}\!) } 
			 	{
			 	\hspace{-105pt}
			 		\{ \Phi_{n^{\!*}}(\gammabold_{\!\A\setminus \{i\} })  - \Phi_{n^{\!*}}^{\mathrm{max}}(\gammabold_{\!\A\setminus\{i\}}\!)   \} }, \ \ \ \ 
			 		\\
			 		\hspace{45pt} 
			 		\mathrm{if}
			 		\ \exists i\!\in\!\M_{n^{\!*}}^{\B} \ \mathrm{s.t.} \ \Phi_{\!n^{\!*}}(\gammabold_{\!\A\setminus\{i\}}) \!> \! \Phi_{n^{\!*}}^{\mathrm{max}}(\gammabold_{\!\A\setminus\{i\}}\!)
				\\
			\hspace{-62pt}
				 	\argmin\limits_{\ \ \ \ \ \ \ \ \ \ \ \ \ \ \ \ \ \ i\in\M_q^{\K}\cap\A | \Phi_{n^{\!*}}(\gammabold_{\!\A\setminus\{i\}})<0} \!\! {\!\!\!\!\!\!\!\!\!\!\!\!\!\!\!\!\!\!\!\!\!\!\!\!\!\!\!\!\!\!\!\!\! 
				 		\{ \Phi_{n^{\!*}}(\gammabold_{\!\A \setminus \{i\} }\! )  \} }, 
				 	\hspace{85pt} \mathrm{otherwise}.
		\end{cases}
	\end{align}
	and
	\begin{align}
	\label{eq:removal_candidate_down}
		\widetilde{i}^{*}\!\! = \!\!
		\begin{cases}
			\hspace{-78pt}
			 	\argmin\limits_{\hspace{70pt} \ \ \ \ i\in\M_q^{\K}\cap \A | \Pdown_{\widetilde{n}^{\!*}}\!(\gammabolddown_{\!\A\setminus\{i\}}\!)>
			 	\Pdown_{n^{\!*}}^{\mathrm{max}} } 
			 	{
			 	\hspace{-80pt}
			 		\{ \Pdown_{\widetilde{n}^{\!*}}(\gammabolddown_{\!\A\setminus \{i\} }) - \Pdown_{\widetilde{n}^{\!*}}^{\mathrm{max}})   \} }, \ \ \
			 		\\
	 				\hspace{70pt} 
	 				\mathrm{if}\ \exists i\!\in\!\M_{\widetilde{n}^{\!*}}^{\B} \ \mathrm{s.t.} \ \Pdown_{\widetilde{n}^{\!*}}(\gammabolddown_{\!\A\setminus\{i\}}) \!> \! \Pdown_{\widetilde{n}^{\!*}}^{\mathrm{max}}
				\\
			\hspace{-63pt}
				 	\argmin\limits_{\ \ \ \ \ \ \ \ \ \ \ \ \ \ \ \ \ \ \ i\in\M_q^{\K}\cap\A | \Pdown_{\widetilde{n}^{\!*}}(\gammabolddown_{\!\A\setminus\{i\}})<0} \!\! {\!\!\!\!\!\!\!\!\!\!\!\!\!\!\!\!\!\!\!\!\!\!\!\!\!\!\!\!\!\!\!\!\! 
				 		\{ \Pdown_{\widetilde{n}^{\!*}}(\gammabolddown_{\!\A \setminus \{i\} }\! )  \} }, 
				 	\hspace{80pt}
				 	 \mathrm{otherwise}.
		\end{cases}
	\end{align}

	Based on what has been discussed so far, the MESPA JPAC for prioritized multi-tier wireless networks is proposed in \textbf{Algorithm 1}.

\begin{algorithm}

    \SetAlgoNoLine    %\SetAlgoLined   
    \DontPrintSemicolon
    
    \small\textbf{Initialization:}\normalsize\;
    \Indp 
     	
     	Let all users be initially admitted (i.e., \mbox{$\A\leftarrow \M$}) and assigned with their target-SINRs.\;
     	
     	Let $q \leftarrow K$ to start removing users from the cells having the lowest priority.\;
     	
    \Indm
    	\small \textbf{Admission Control:} \normalsize\;
    \Indp
    
    Calculate $\Phibold(\gammabold_{\!\A})$ from \eqref{eq:phi} for the uplink case or $\Pdownbold(\gammabolddown_{\!\A})$ from \eqref{eq:Pdown} for the downlink case.\;
    
    \If {$\gammabold_{\!\A}$ for the uplink case is infeasible (checked through \eqref{eq:fisible_phi}) or $\gammabolddown_{\!\A}$ for the downlink case is infeasible (checked through \eqref{eq:fisible_simple_down})}
    {
    	\While{ $|M_q^{\K}\cap \A|=0$}
    	{
    		\eIf{$q>1$}
    		{
    			$q\leftarrow q-1$\;
    		}
    		{
    			None of the users can be supported,
    			terminate the algorithm.
    		}
    	}

   		Determine $n^{*}\in\B$ for the uplink case from \eqref{eq:ncandid} or $\widetilde{n}^{*}\in\B$ for the downlink case from \eqref{eq:ncandid_down}.\; % as the BS with the maximum infeasibility measure\;
   		
		Let \mbox{$\A\leftarrow\A \setminus \{i^{*}\}$} for the uplink case, where $i^{*}$ is obtained by \eqref{eq:removal_candidate_up}, or \mbox{$\A\leftarrow\A \setminus \{\widetilde{i}^{*}\}$} for the downlink case where $\widetilde{i}^{*}$ is obtained by \eqref{eq:removal_candidate_down}\;
   		
    }
    \Indm
    \small \textbf{Power Calculation:} \normalsize\;
    \Indp
 
 	Calculate the  power vector corresponding to $\gammabold_{\!\A}$ for the uplink case using \eqref{eq:p_i} or the power vector corresponding to $\gammabolddown_{\!\A}$ for the downlink case using \eqref{eq:p_idown}.\;
\caption{MESPA}
\end{algorithm}

	It is seen from step 7 of MESPA and from \eqref{eq:removal_candidate_up} and \eqref{eq:removal_candidate_down} that the determination of the removal candidate at each iteration needs   $|\M_q^{\K}\cap\A|$ times of matrix inversions. In what follows, we propose another algorithm with less complexity which needs only one matrix inversion in the calculation of the removal candidate at each iteration. We first consider the uplink scenario and then extend the results for the downlink scenario as well.

\subsubsection{MLSPA}
	Similar to the previous algorithm, consider that all users in the uplink communication are initially admitted and also assume that at each iteration, $n^{*}\in\B$ is determined from \eqref{eq:ncandid}. We determine the removal candidate $i^{*}$ from the set of users associated with BSs $\B_q^{\K}$ as the one with maximum approximal decrease in the infeasibility measure of BS $n^{*}$. To do this, we use the first-order sensitivity analysis of matrix $\Phibold$. From \eqref{eq:phi}, let $\mathbf{A=I-H}(\gammabold_{\!\A})$  and $\Phibold=\mathbf{A}^{-1} \mathbf{N}$. The partial derivative of $\Phi_{n^{*}}$ with respect to the coefficients of $\mathbf{A}$ is obtained as follows \cite{sensativity_analysis}:
	\begin{align}
		\frac{\partial \Phi_{n^{*}}}{\partial A_{mn}}= - (\mathbf{A}^{-1})_{n^{\!*}m} \Phi_n.
	\end{align}
	Therefore, by using the first-order Taylor series expansion, the variation of $\Phi_{n^{*}}$ (denoted by $\Delta \Phi_{n^{*}}$) with respect to the marginal variation of the coefficients of $\mathbf{A}$ (denoted by $\Delta A_{mn}$) is expressed as
	\begin{align}
	\label{eq:Delta_phi_n}
		\Delta \Phi_{n^{*}} = \sum\limits_{m,n\in\B} -(\mathbf{A}^{-1})_{n^{\!*}m} \Phi_{n} \Delta A_{mn}.
	\end{align}
	Let $\Delta \Phi_{n^{*}}^{i}$ denote the variation of $\Delta \Phi_{n^{*}}$ when user $i\in\M$ is removed from the set of admitted users. From \eqref{eq:H_elements} and \eqref{eq:Delta_phi_n} we have
	\begin{align}
	\label{eq:Delta_phi_n_i}
		\Delta \Phi_{n^{*}}^{i} = \sum\limits_{m,n\in\B} -(\mathbf{A}^{-1})_{n^{\!*}m} \Phi_{n} \Delta A_{mn}^{i},
	\end{align}
	where
	\begin{align}
	\label{eq:Delta_A}
		\Delta A_{mn}^{i}=
		\begin{cases}
			- \teta{i}, & \mathrm{if\ } m=n\ \mathrm{and}\ i\in\M_m^{\B}, \\
			- \frac{\hup{m}{i}}{\hup{n}{i}} \teta{i}, & \mathrm{if\ } m\neq n\ \mathrm{and}\ i\in\M_n^{\B},\\
			0, & \mathrm{else}.
		\end{cases}
	\end{align}
	It can be verified from \eqref{eq:Delta_phi_n_i} and \eqref{eq:Delta_A} that $\Delta \Phi_{n^{\!*}}^{i}$ can be rewritten as 
	\begin{align}
	\label{eq:Delta_phi_n_i_2}
		\Delta \Phi_{n^{\!*}}^{i} = - \Phi_{{b_i}}  \sum\limits_{m\in\B}  (\mathbf{A}^{-1})_{n^{\!*}m}  \Delta\!'\! A_{m}^{i},
	\end{align}
	where
	\begin{align}
	\label{eq:Delta_A_2}
		\Delta\!'\! A_{m}^{i}=
		\begin{cases}
			- \teta{i}, & \mathrm{if\ } m=b_i, \\
			- \frac{\hup{m}{i}}{\hup{b_i}{i}} \teta{i}, & \mathrm{if\ } m\neq b_i.
		\end{cases}
	\end{align}	
	Thus, the removal candidate $i^{*}$  among the users having the admission priority of $q$ is obtained as follows:
	\begin{align}
	\label{eq:i_star}
			i^{*} &=\argmax_{i\in\M_q^{\K}}{ |\Delta \Phi_{n^{\!*}}^{i}| } \nonumber \\
			  &=  \argmax_{i\in\M_q^{\K}}{ |\Phi_{{b_i}} \! \sum\limits_{m\in\B} { \!\!\! (\mathbf{A}^{-1})_{n^{\!*}m}  \Delta\!'\! A_{m}^{i}}| }.
	\end{align}
	
	Similar to the uplink communication scenario, the removal candidate $\widetilde{i}^{*}$ at each iteration of the removal process in the downlink scenario is obtained as 
	\begin{align}
	\label{eq:i_stardown}
		\widetilde{i}^{*} &=\argmax_{i\in\M_q^{\K}}{ |\Delta \Pdown_{\widetilde{n}^{\!*}}^{i}| } \nonumber \\
	  &=  
	  \argmax_{i\in\M_q^{\K}}{ | \Pdown_{b_i} \sum\limits_{m\in\B} { \!\!\! (\widetilde{\mathbf{A}}^{-1})_{\widetilde{n}^{\!*}m}  \Delta\!'\! \widetilde{A}_{m}^{i}}| },
	\end{align}
	where $\mathbf{\widetilde{A}=I-\widetilde{H}}(\gammabolddown_{\!\A})$  and $\Pdownbold=\widetilde{\mathbf{A}}^{-1} \widetilde{\mathbf{N}}^{*}$ and
	\begin{align}
	\label{eq:Delta_A_2_down}
		\Delta\!'\! \widetilde{A}_{m}^{i}=
		\begin{cases}
			- \thetadown{i}, & \mathrm{if\ } m=b_i, \\
			- \frac{\hup{i}{b_i}}{\hup{i}{m}} \thetadown{i}, & \mathrm{if\ } m\neq b_i.
		\end{cases}
	\end{align}
	Therefore, MESPA may be revised to obtain the new JPAC algorithm (\textbf{Algorithm 2}).\\
	
	\begin{algorithm}
	
	    \SetAlgoNoLine    %\SetAlgoLined   
	    \DontPrintSemicolon
	    
	    \small\textbf{Initialization:}\normalsize\;
	    \Indp 
	     	
	     	This is same as the initialization phase of \textbf{MESPA}\;
	     	
	    \Indm
	    	\small \textbf{Admission Control:} \normalsize\;
	    \Indp
	    
	    For the uplink case, calculate $\mathbf{A\!=\!I \! - \! H}(\gammabold_{\!\A})$ from \eqref{eq:phi} and $\Phibold=\mathbf{A}^{-1} \mathbf{N}$ and for the downlink case, calculate $\mathbf{\widetilde{A}\!=\! I \! - \! \widetilde{H}}(\gammabolddown_{\!\A})$ from \eqref{eq:Pdown}  and $\Phibold=\widetilde{\mathbf{A}}^{-1} \widetilde{\mathbf{N}}^{*}$.\;
	    
	    \If {$\gammabold_{\!\A}$ for the uplink case is infeasible (checked through \eqref{eq:fisible_phi}) or $\gammabolddown_{\!\A}$ for the downlink case is infeasible (checked through \eqref{eq:fisible_simple_down})}
	    {
	    	\While{ $|M_q^{\K}\cap \A|=0$}
	    	{
	    		\eIf{$q>1$}
	    		{
	    			$q\leftarrow q-1$\;
	    		}
	    		{
	    			None of the users can be supported,
	    			terminate the algorithm.
	    		}
	    	}
	
	   		Determine $n^{*}\in\B$ for the uplink case from \eqref{eq:ncandid} or $\widetilde{n}^{*}\in\B$ for the downlink case from \eqref{eq:ncandid_down}.\; % as the BS with the maximum infeasibility measure\;
	   		
			Let \mbox{$\A\leftarrow\A \setminus \{i^{*}\}$} for the uplink case, where $i^{*}$ is obtained by \eqref{eq:i_star}, or \mbox{$\A\leftarrow\A \setminus \{\widetilde{i}^{*}\}$} for the downlink case where $\widetilde{i}^{*}$ is obtained by \eqref{eq:i_stardown}\;
	   		
	    }
	    \Indm
	    \small \textbf{Power Calculation:} \normalsize\;
	    \Indp
	 
	 	This is same as the power calculation phase of \textbf{MESPA}.\;
	\caption{MLSPA}
	\end{algorithm}

\subsection{Complexity Analysis of the Proposed Algorithms}

	We consider the worst-case complexity of the algorithms where all users are removed from the set of active users. First note that in the calculation of $\Phibold$ and $\Pdownbold$, the coefficients of $\mathbf{I-H(\gammabold_{\!\A})}$ and $\mathbf{I-\widetilde{H}(\gammabolddown_{\!\A})}$ may be computed once at the initialization phase with complexity of $O(M\!\times\! B)$ and updated for the next steps/iterations with complexity of $O(1)$. In MESPA,  for feasibility checking, at each iteration of the removal process, we need matrix inversion $|\A|$ times. Since there could be at-most $M$ iterations, the overall complexity of MESPA is of $O(M\!\times\! B)+\sum_{|\A|=1}^{M}{|\A|O(B^3)}=O(M^2\!\times\! B^3)$. In MLSPA, we need only one matrix inversion at each of the removal iterations. Therefore, the complexity of MLSPA is obtained as $O(M\!\times\! B)+\sum_{|\A|=1}^{M}{O(B^3)}=O(M\!\times\! B^3)$.
	
\subsection{Comparison Among the Proposed and Existing Algorithms}

	% cite to produce: www.tablesgenerator.com
	\def\arraystretch{1.2}
	\setlength{\tabcolsep}{3pt}  % reduce cell margins

	\begin{table*}[t]
		\caption{Comparison of the proposed algorithms with other related algorithms}
		\centering
		\scriptsize

		\begin{tabular}{|c|c|c|c|c|c|c|c|}
		\hline
		\rowcolor[HTML]{EFEFEF} 
		\multicolumn{2}{|c|}{\cellcolor[HTML]{EFEFEF} \textbf{Algorithm} } &
		\textbf{\begin{tabular}[c]{@{}c@{}}
		Complexity of feas.\\ checking of \\ low-priority users
		\end{tabular}}
		 \cellcolor[HTML]{EFEFEF} &
		 \textbf{\begin{tabular}[c]{@{}c@{}}Complexity of feas.\\ checking of \\ high-priority users\end{tabular}}
		\cellcolor[HTML]{EFEFEF} &
		\cellcolor[HTML]{EFEFEF} 
		\textbf{\begin{tabular}[c]{@{}c@{}}
				Complexity of the\\ algorithm
				\end{tabular}}
		& \multicolumn{2}{c|}{\cellcolor[HTML]{EFEFEF}\textbf{\begin{tabular}[c]{@{}c@{}}No. of supported \\ cells   in each \\ prioritized level\end{tabular}}} & \cellcolor[HTML]{EFEFEF} \textbf{\begin{tabular}[c]{@{}c@{}}No. of supported \\ prioritized tiers\end{tabular}}
		\\ \cline{6-7}
		\rowcolor[HTML]{EFEFEF} 
		\multicolumn{2}{|c|}{\multirow{-2}{*}{\cellcolor[HTML]{EFEFEF}}} & \multirow{-2}{*}{\cellcolor[HTML]{EFEFEF}\textbf{\begin{tabular}[c]{@{}c@{}}\end{tabular}}} & \multirow{-2}{*}{\cellcolor[HTML]{EFEFEF}} & \multirow{-2}{*}{\cellcolor[HTML]{EFEFEF}} & \textbf{single-cell} & \textbf{multi-cell} & \multirow{-2}{*}{\cellcolor[HTML]{EFEFEF}} \\ \hline
		\multicolumn{2}{|c|}{ISMIRA \cite{ISMIRA}} & \begin{tabular}[c]{@{}c@{}}iterative and not \\ predictable\end{tabular} & $O(M_s B_p)$ & \begin{tabular}[c]{@{}c@{}}$O\big(M_s(M_s^2 + M_s B_p$ + compl. of \\ feas. checking of low-priority users $\big)$ \\ +  compl. of calculating ITLs\end{tabular} &  & \checkmark & 2 \\ \hline
		\multicolumn{2}{|c|}{CIGSA \cite{JPAC_by_color_graph}} & $O(M_s^3)$ & $O(M_s B_p)$ & \begin{tabular}[c]{@{}c@{}}$O(M_s^3(M_s^3 + M_s B_p))$ + \\ compl. of calculating ITLs\end{tabular} &  & \checkmark & 2 \\ \hline
		\multicolumn{2}{|c|}{CPCSA \cite{JPAC_by_color_graph}} & $O(M_s^3)$ & $O(M_s B_p)$ & \begin{tabular}[c]{@{}c@{}}$O(M_s(M_s^3 + M_s B_p))$ + \\ compl. of calculating ITLs\end{tabular} &  & \checkmark & 2 \\ \hline
		\multicolumn{2}{|c|}{LGRA \cite{LGRA}} & $O(M_s^3)$ & $O(M_s B_p)$ & \begin{tabular}[c]{@{}c@{}}higher than $O(M_s^2(M_s+B_p))$\\ + compl. of calculating ITLs\end{tabular} &  & \checkmark & 2 \\ \hline
		\multicolumn{2}{|c|}{ESRPA \cite{monemi_ESRPA}} & \multicolumn{2}{c|}{integrated into the complexity of the algorithm} & $O(M_s^2)$ & \checkmark &  & 2 \\ \hline
		\multicolumn{2}{|c|}{ELGRA \cite{monemi_ESRPA}} & \multicolumn{2}{c|}{integrated into the complexity of the algorithm} & $O(M_s \log(M_s))$ & \checkmark &  & 2 \\ \hline
		 & \ MESPA \ & \multicolumn{2}{c|}{$O(B^3+ M B)$} & $O(M^2 B^3)$ &  & \checkmark & $K\geq 1$ \\ \cline{2-8} 
		\multirow{-2}{*}{\begin{tabular}[c]{@{}c@{}}Our\\ algorithms\end{tabular}} & \ MLSPA \ & \multicolumn{2}{c|}{$O(B^3+ M B)$} & $O(M B^3)$ &  & \checkmark & $K\geq 1$ \\ \hline
		\end{tabular}

		\label{tbl:comparison_of_algorithms}
	\end{table*}

	Table \ref{tbl:comparison_of_algorithms} shows the main features of our proposed algorithms compared with those of the most well-known JPAC algorithms for two-tier prioritized cellular networks (i.e., cognitive radio networks), namely, ISMIRA \cite{ISMIRA}, CIGSA, and CPCSA \cite{JPAC_by_color_graph}, LGRA \cite{LGRA}, and ESRPA and ELGRA \cite{monemi_ESRPA}\footnote{ ISMIRA \cite{ISMIRA} and LGRA \cite{LGRA} are two well-known power control algorithms for CRNs against which  most of the existing works are compared. CIGSA and CPCSA \cite{JPAC_by_color_graph} are two power control algorithms whose removal metrics are devised based on the theory of graph coloring and have been shown to offer good performance. ESRPA and ELGRA \cite{monemi_ESRPA}  algorithms are computationally very efficient; however, they consider only one primary BS and one cognitive BS.}. In the process of checking whether a subset of admitted low-priority users (i.e., SUs) guarantee the protection of all high-priority users (i.e., PUs), it is assumed in ISMIRA, CIGSA, CPCSA and LGRA that the interference imposed from SUs to each primary BS must be kept under a constant value known as the interference temperature limit (ITL) of that BS. The complexity of checking this  is of $O(M_s)$ for each primary BS and of $O(B_p M_s)$ for all primary BSs at each iteration of the removal process, where $M_s$ and $B_p$ denote the total number of SUs and the total number of primary BSs, respectively. This together with the complexity of the calculation of the ITLs is integrated into the total complexity of the aforementioned algorithms. 

Note that the complexity of ESRPA and ELGRA are much less than that of ISMIRA, CIGSA, CPCSA, and LGRA, but they are designed based on the single-cell model for PRN and CRN and they do not apply to multi-cell primary and secondary networks as other algorithms do. It is seen that the complexities of our proposed algorithms are far below those of others for multi-cell wireless networks. Besides, all algorithms other than ours only work for two-tier cognitive radio networks while both MESPA and MLSPA work for single-tier, two-tier, and prioritized multi-tier networks. 
	
\section{Performance Results and Discussions}
	\label{sec:performance_evaluation_results}

	In this section, we first show the performances of our proposed algorithms for downlink communication scenario in a network consisting of three tiers and then present numerical results to evaluate the performance of our proposed algorithms for the uplink power control scenario in two-tier cognitive radio networks as compared to the well-known existing uplink JPAC algorithms, namely, ISMIRA \cite{ISMIRA} and LGRA \cite{LGRA}.

	For all the following simulation scenarios, the noise power level at the receivers of all links is assumed to be $5\times{10}^{-13}$ Watts and the carrier frequency is 1.9 GHz. The path-gain of the links are considered to be obtained from the statistical path-loss
	and fading model \cite{path_loss_model1} where we set the path-loss exponent to 3 and consider different values for the standard deviation of  log-normal shadowing  for different simulation scenarios. Note that in the simulations we use normalized values of the SINRs. The actual SINR of each user is its normalized SINR multiplied by the processing gain. For example, if the processing gain is equal to 100, the actual SINR is 20 dB higher than the normalized SINR.
	
\begin{figure}
		\centering
		\includegraphics [width=3.5in]{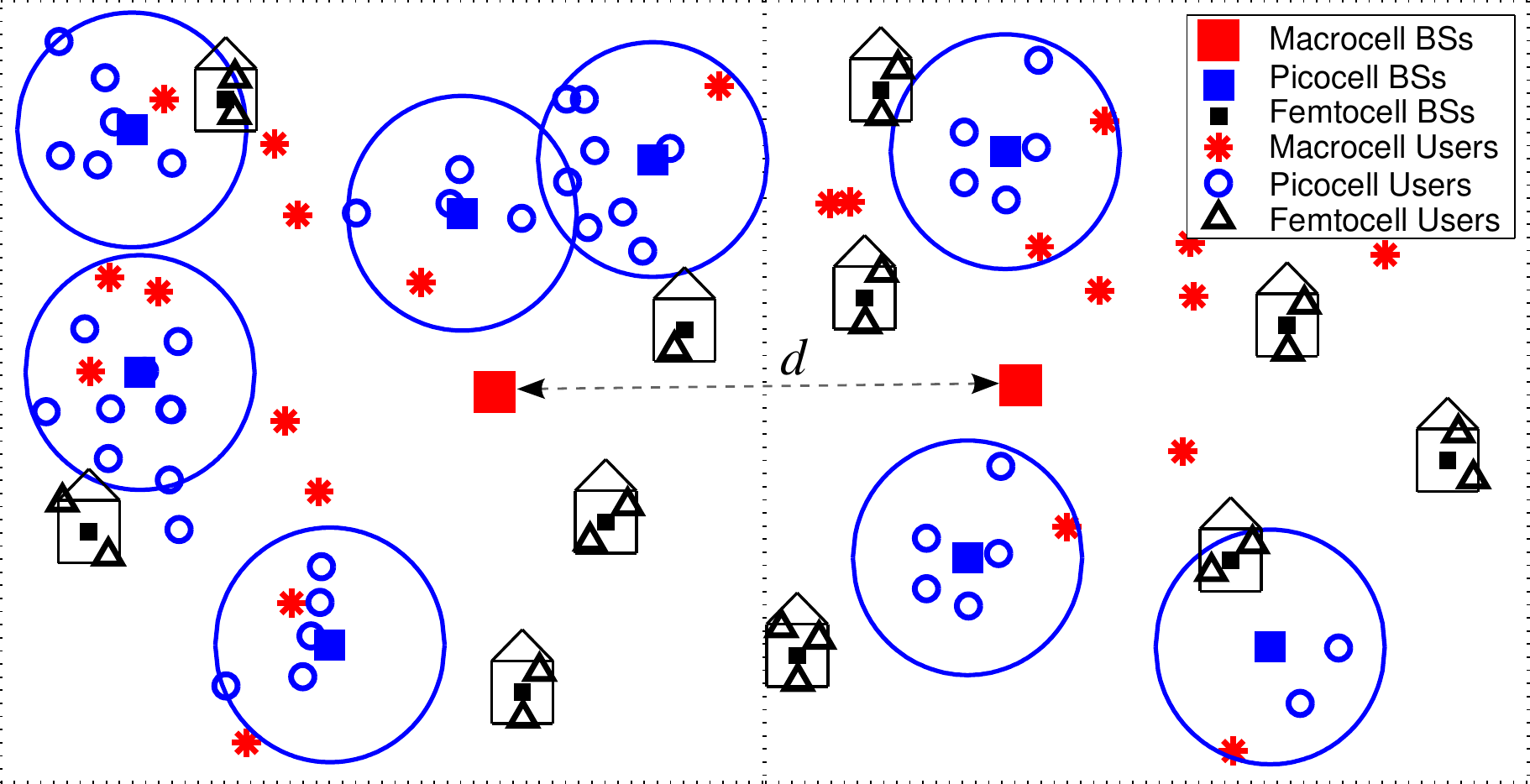}\\ % 
		\caption{The three-tier network structure to show the performance of our proposed algorithms in the downlink communication scenario, consisting two macrocells (tier 1), several picocells (tier 2), and several femtocells (tier 3).}
	\label{fig:downlink_sim_structure}
	\end{figure}

\subsection{Downlink Scenario for a Three-tier Network}

	To show the performance of our proposed algorithms in multi-tier downlink communication scenario, we consider a three-tier wireless network as shown in Fig. \ref{fig:downlink_sim_structure} in which there exist two rectangular 1000m $\times$ 1000m macrocells (tier 1), together with several circular picocells (tier 2) each having a radius of 100m and several square-like femtocells (tier 3) each having the dimensions of 20m $\times$ 20m located inside the coverage area of the two macrocells. The BSs of the two macrocells are located at a distance of $d$ m according to the figure. The BSs of the picocells and femtocells together with all users of the macrocells, picocells and femtocells are randomly located inside the corresponding coverage areas according to homogeneous \emph{Poisson Point Processes} (PPPs) with different intensities. The BS of each femtocell and picocell is located at the centre of the corresponding cell. We have considered $d=300$m, the heights of the macrocell, picocell, and femtocell BSs are 20m, 
20m, and 0m, respectively, and the maximum allowed transmit power of the macrocell, picocell, and femtocell BSs are considered to be 50W, 0.5W, and 0.1W, respectively. We also assume that the admission of the macrocell, picocell, and femtocell users has the highest, medium, and lowest priority, respectively. To consider the fading effect of outdoor-to-indoor and indoor-to-outdoor propagations, the standard deviation of log-normal fading is considered to be 6 dB for all links whose one side is within a femtocell and the other side is in some picocell or macrocell. The standard deviation of fading for all other links is considered to be 4 dB.

\begin{figure}
		\centering
		\includegraphics [width=3.5in]{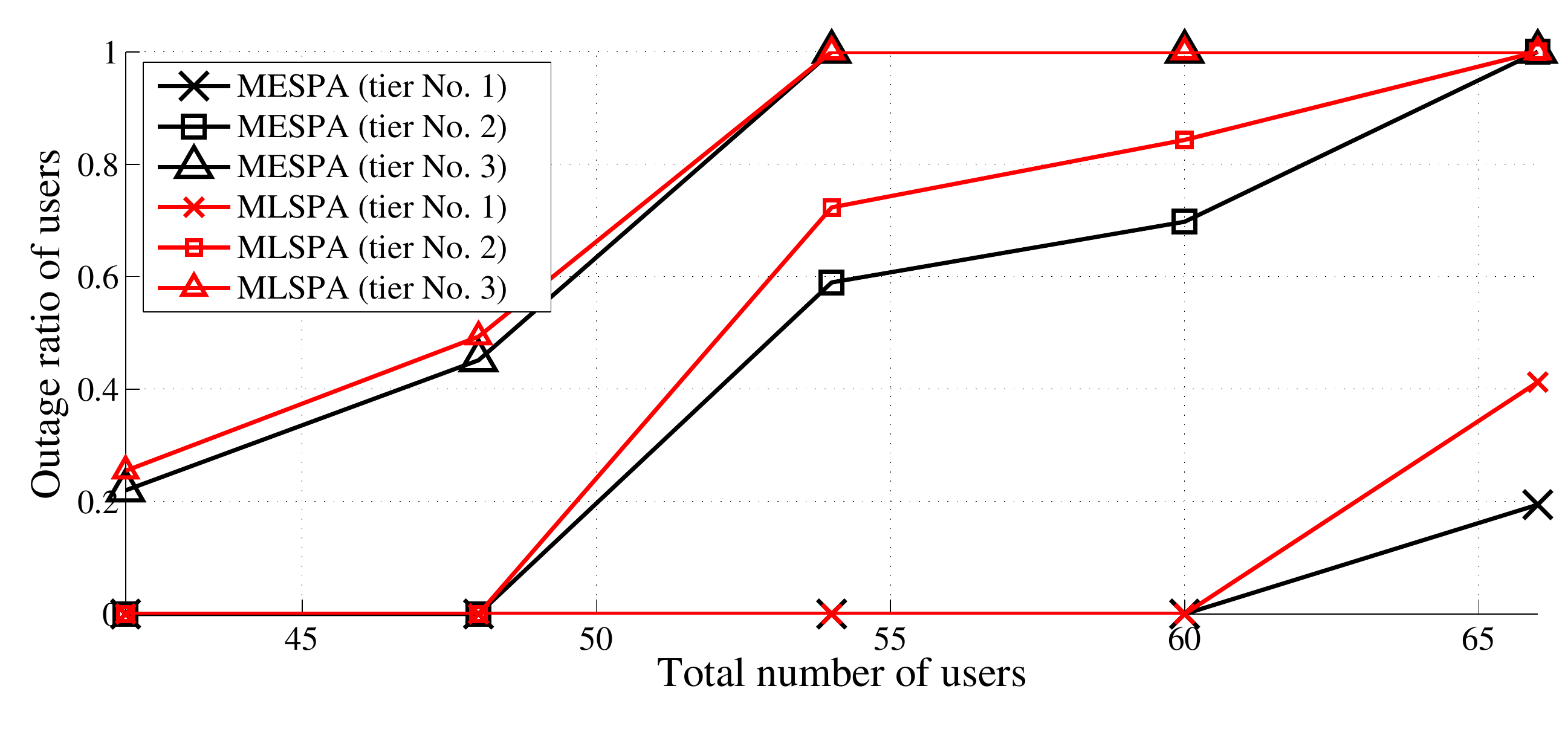}\\ % 
		\caption{The outage ratio for each tier versus total number of users of the network for a single snapshot of the downlink three-tier communication for the network according to Fig. \ref{fig:downlink_sim_structure}.}
	\label{fig:downlink_performance_single_snapshot}
	\end{figure}

	Figs. \ref{fig:downlink_performance_single_snapshot} and \ref{fig:downlink_performance_multiple_snapshots} show the performances of our proposed algorithms in terms of  outage ratio of users for a single snapshot and average outage ratio of users for 750 independent snapshots, respectively, versus average total number of users. The outage ratio for each tier at each snapshot is calculated as the ratio of the number of users in that tier who  have been admitted and provided with their target-SINRs and the total number of users of that tier. To obtain the results, we assume that there exist an average of 10, 2, and 2  users initially distributed (according to PPP) in each cell of the tiers 1, 2, and 3, respectively (initially an average of 10, 12, and 20 users in tiers 1, 2, and 3, respectively). Then, at each step, an average of 2 users are added to each of the tiers (i.e., an average of 6 users are added to the network) according to PPP inside some of the cells in the corresponding tier.   It is seen from Fig. \ref{fig:downlink_performance_single_snapshot} that the priority constraints are satisfied by considering that in both MESPA and MLSPA and for each of the total number of users, if the outage of the users of some tier is higher than zero and less than unity, then the outage ratio of tiers of higher and lower priorities are zero and unity, respectively. It is also seen from Fig. \ref{fig:downlink_performance_multiple_snapshots} that, for both MESPA and MLSPA, the average outage ratio of users associated with a higher priority tier is less than that of users associated with  lower priority tiers. Besides,  as seen in both Figs.  \ref{fig:downlink_performance_single_snapshot} and \ref{fig:downlink_performance_multiple_snapshots}, MESPA slightly outperforms MLSPA in terms of outage ratio.
	
	\begin{figure}
		\centering
		\includegraphics [width=3.5in]{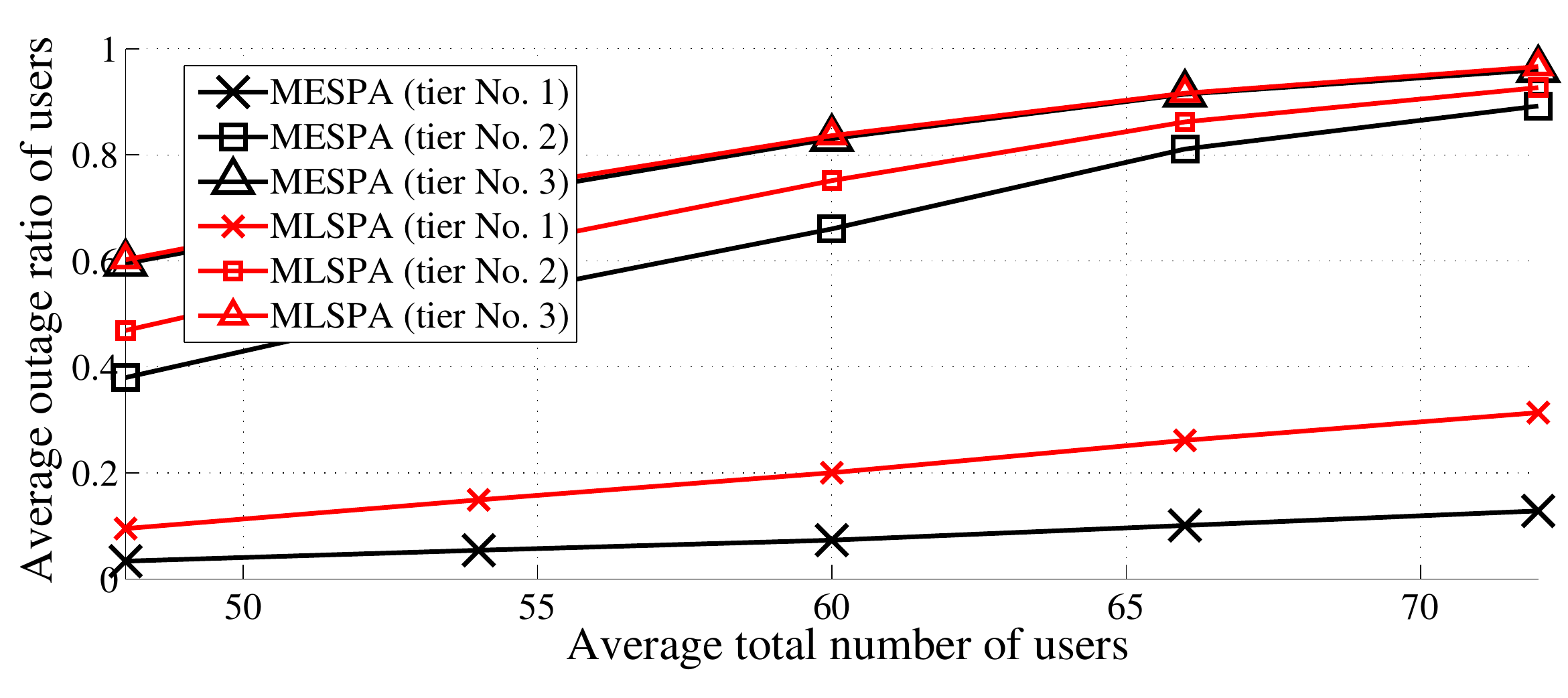}\\ % 
		\caption{The average outage ratio for each network tier versus total number of users of the network for 750 independent snapshots of the downlink three-tier communication for the network according to Fig. \ref{fig:downlink_sim_structure}.}
	\label{fig:downlink_performance_multiple_snapshots}
	\end{figure}
	
\subsection{Uplink Scenarios for Two-tier Networks}
	
		\begin{figure*}
			\centering
			\includegraphics []{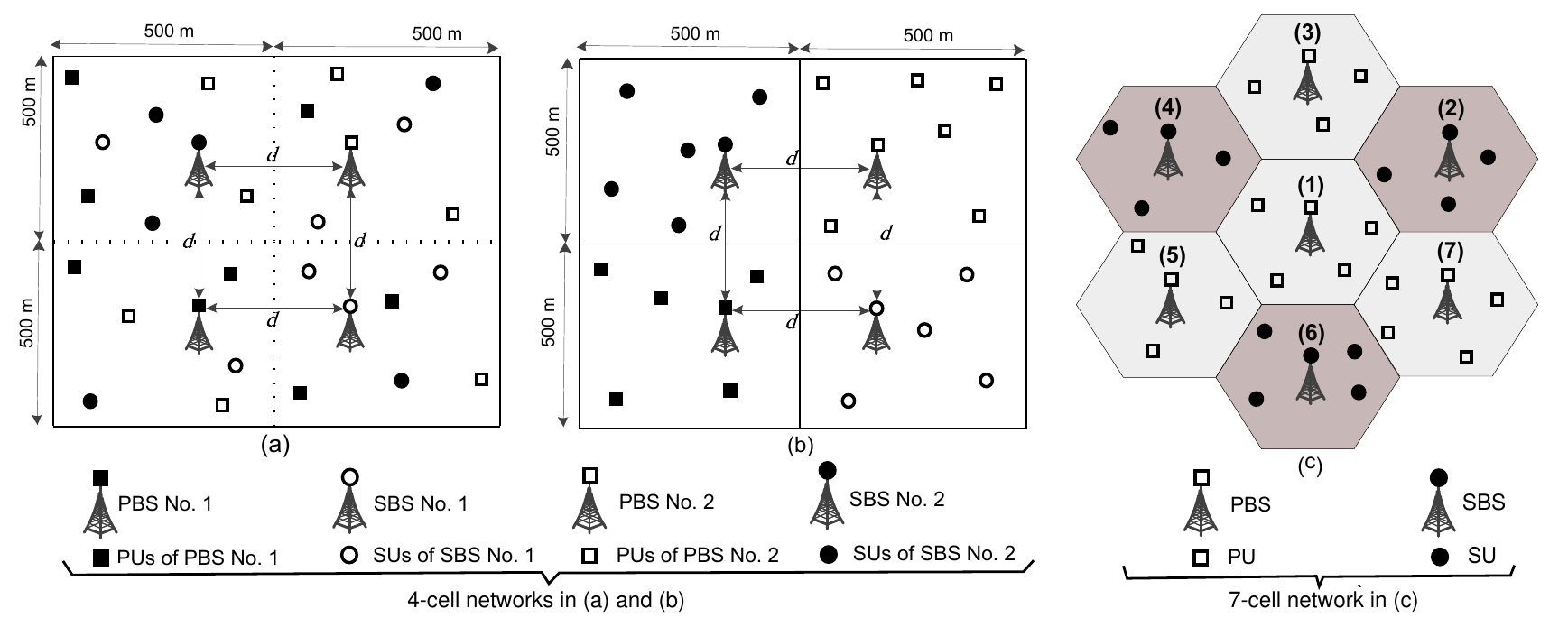}\\
			\caption{Three different scenarios for evaluation of the performances of our proposed algorithms, where  (a) and (b) show 4-cell networks wherein users are randomly spread throughout the network area and in the area closer to their serving BSs, respectively, and (c) shows a network having 7 hexagonal cells consisting of 4 primary and 3 secondary cells.} \vspace{-10pt}
			\label{fig:sim_Topology}
		\end{figure*}

	To compare  MESPA and MLSPA with the existing JPAC algorithms for uplink  two-tier networks (e.g., a network of cognitive radios), we consider three scenarios. In the first two scenarios, we consider two networks each with the area of 1000m $\times$ 1000m. Each of the networks consists of a primary network (PRN) with 2 primary base stations (PBSs) and a CRN with two secondary base stations (SBSs) according to Figs. \ref{fig:sim_Topology}(a) and \ref{fig:sim_Topology}(b). In the third scenario, we consider a 7-cell hexagonal network as shown in Fig. \ref{fig:sim_Topology}(c) in which there exist 4 primary cells (cells 1, 3, 5, and 7) and 3 secondary cells (cells 2, 4, and 6). In Fig. \ref{fig:sim_Topology}(a), all PUs and SUs are randomly spread in the whole coverage area of the network and in Figs. \ref{fig:sim_Topology}(b) and \ref{fig:sim_Topology}(c) all PUs and SUs are randomly located in the area closer to their serving BSs. We have considered $d=150$m in Figs. \ref{fig:sim_Topology}(a) and \ref{fig:sim_Topology}(b), and the radius of each hexagonal cell to be $600$m in Fig. \ref{fig:sim_Topology}(c). The heights of all BSs are considered to be 20 m and the maximum allowed transmit power to be $0.1$W for all users. In all the simulations, we assume that there exist an average of 8 PUs distributed in each cell according to a PPP (an average of 16 PUs in the networks shown in Figs. \ref{fig:sim_Topology}(a) and \ref{fig:sim_Topology}(b), and average of 32  PUs in the network shown in Fig. \ref{fig:sim_Topology}(c)). We evaluate the performances of our proposed algorithms in terms of average outage ratio for the low-priority users (e.g., SUs). The standard deviation of log-normal fading is assumed to be 4 dB for all links. All the results for all of the following simulation scenarios are obtained by averaging over 2500 independent snapshots. 	
	
	\begin{figure}
		\centering
		\includegraphics [width=3.5in]{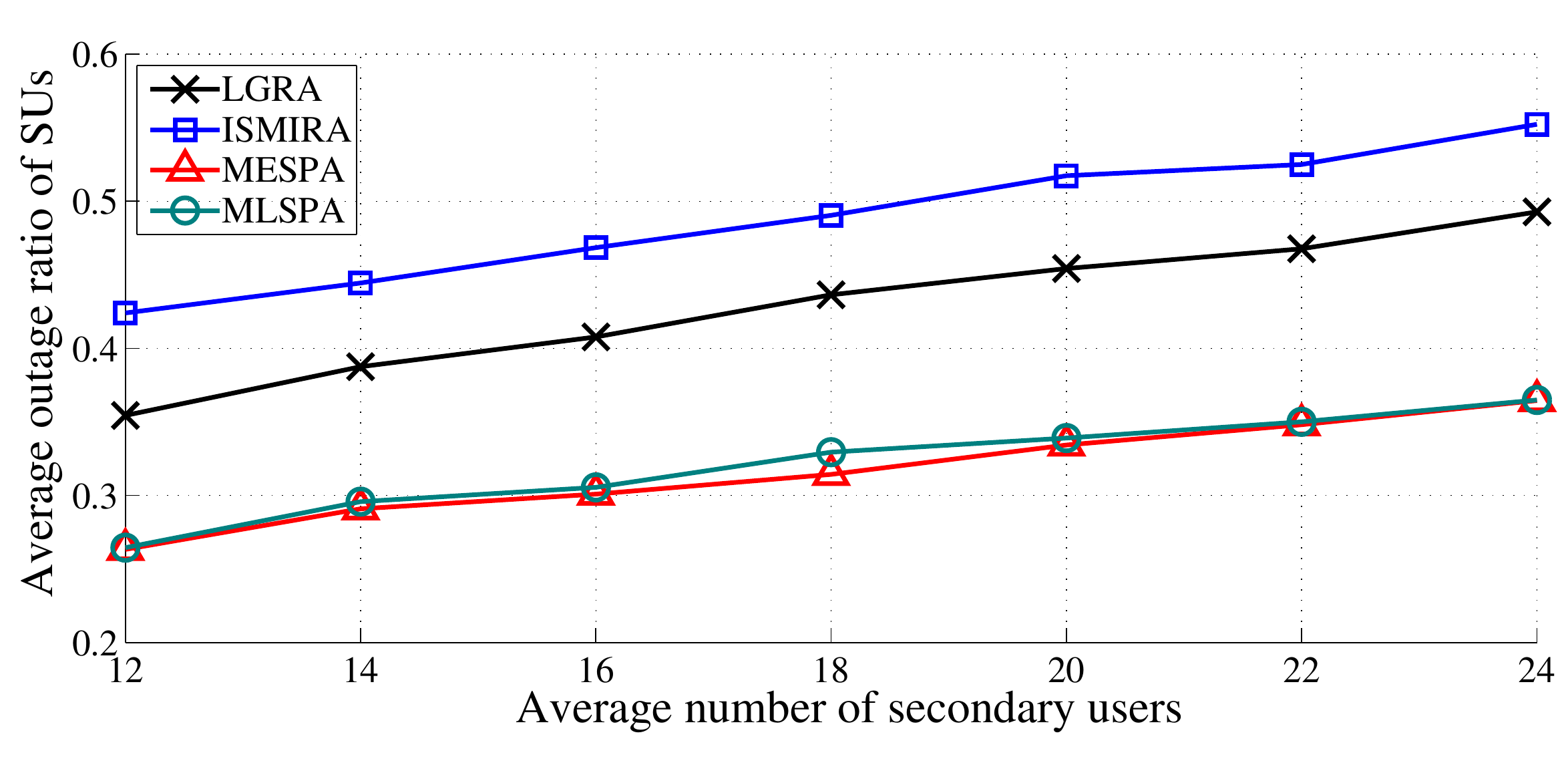}\\ 
		\caption{Average outage ratio for the SUs versus different total number of SUs for the uplink scenario in Fig. \ref{fig:sim_Topology}(a).}
	\label{fig:SUs_outage_4cell1_versus_SUs}
	\end{figure}

\subsubsection{Performance under varying number of low-priority users}
	
		\begin{figure}
		\centering
		\includegraphics [width=3.5in]{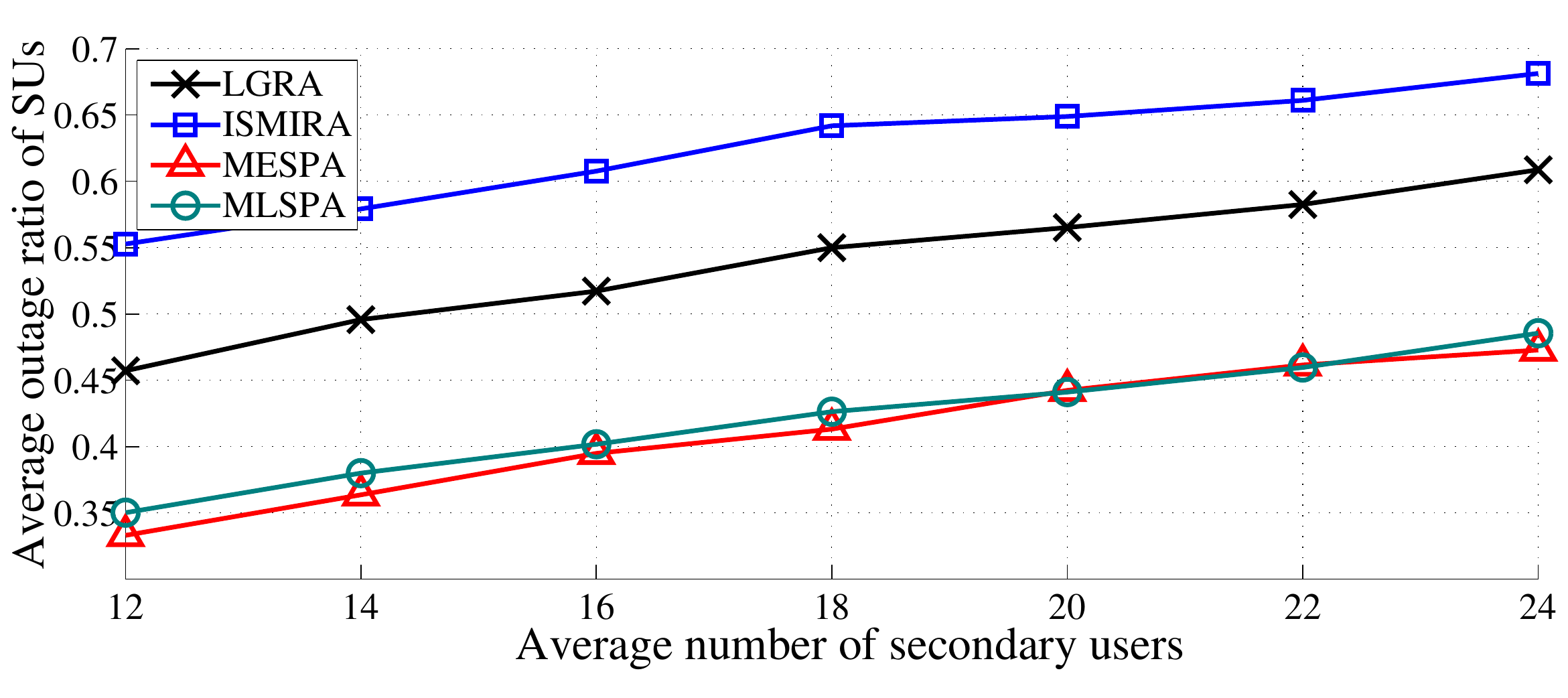}\\ 
		\caption{Average outage ratio for the SUs versus different total number of SUs for the uplink scenario in Fig. \ref{fig:sim_Topology}(b).}
	\label{fig:SUs_outage_4cell2_versus_SUs}
	\end{figure}
	\begin{figure}
		\centering
		\includegraphics [width=3.5in]{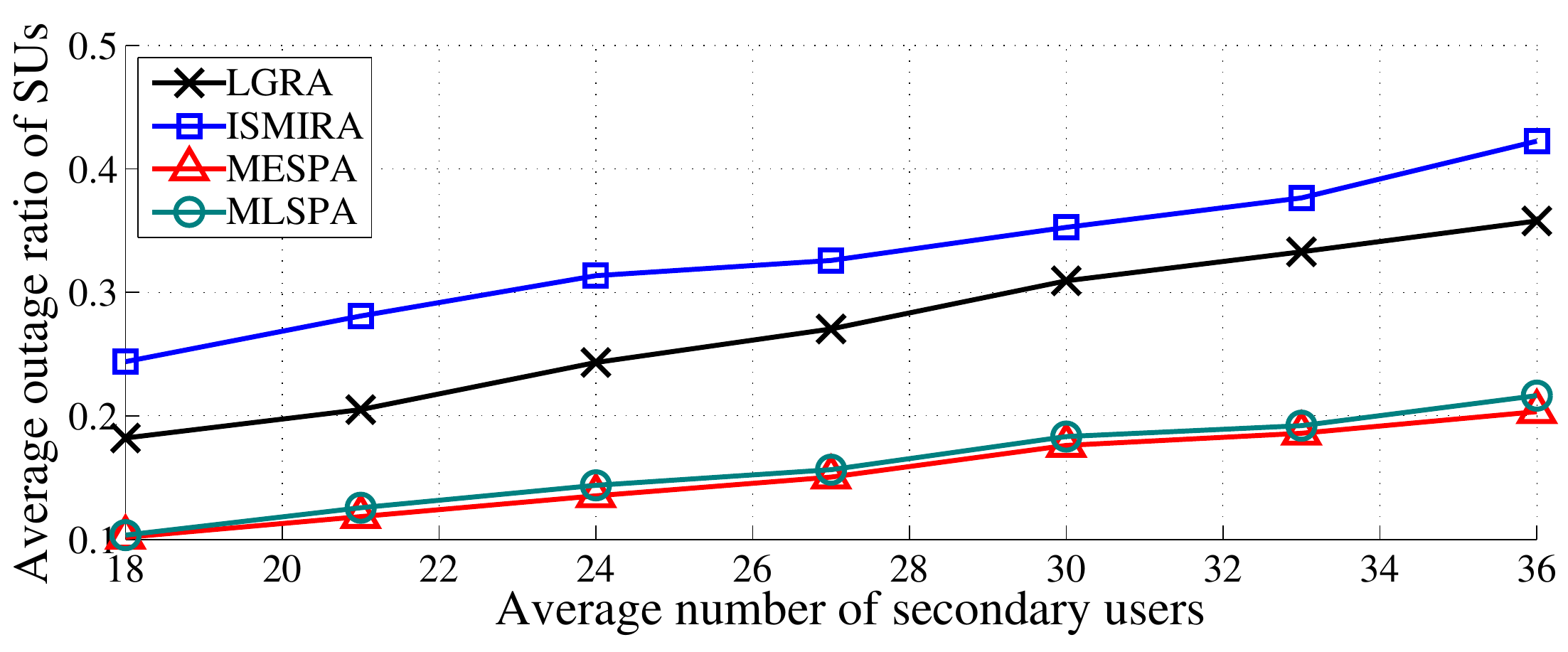}\\ 
		\caption{Average outage ratio for the SUs versus different total number of SUs for the uplink scenario in Fig. \ref{fig:sim_Topology}(c).}
	\label{fig:SUs_outage_Hex_versus_SUs}
	\end{figure}
	
	Consider the case where different number of SUs are distributed according to PPP in each of the cells of the CRN.  The average number of SUs in each cell varies from 6 to 12 with the step-size of one SU (average total of 12 to 24  SUs with the step size of 2 SUs in the 4-cell network shown in Figs. \ref{fig:sim_Topology}(a) and \ref{fig:sim_Topology}(b), and average total of 18 to 36 SUs with the step size of 3 SUs in the  network shown in Fig. \ref{fig:sim_Topology}(c)). We consider that the target-SINR for each user is randomly chosen from the set $\{-16, -22\}$ dB for the 4-cell network shown in Fig. \ref{fig:sim_Topology}(a), and from the set $\{-10, -16\}$ dB for the networks shown in Figs. \ref{fig:sim_Topology}(b) and \ref{fig:sim_Topology}(c). Figs. \ref{fig:SUs_outage_4cell1_versus_SUs}, \ref{fig:SUs_outage_4cell2_versus_SUs}, and \ref{fig:SUs_outage_Hex_versus_SUs} show the average outage ratio of SUs for the networks shown in Figs. \ref{fig:sim_Topology}(a), \ref{fig:sim_Topology}(b), and \ref{fig:sim_Topology}(c), respectively. In addition to the lower computational complexity of MESPA and MLSPA as shown in Table \ref{tbl:comparison_of_algorithms}, from  all of these figures it is observed that, with these algorithms, a higher average number of SUs can be supported when compared with ISMIRA and LGRA,  for all values of total number of SUs in all the scenarios. It is also seen that MESPA offers a slightly superior performance when compared with MLSPA at the cost of a higher computational complexity. 

\subsubsection{Performance under varying target-SINR of users}

	\begin{figure}
		\centering
		\includegraphics [width=3.5in]{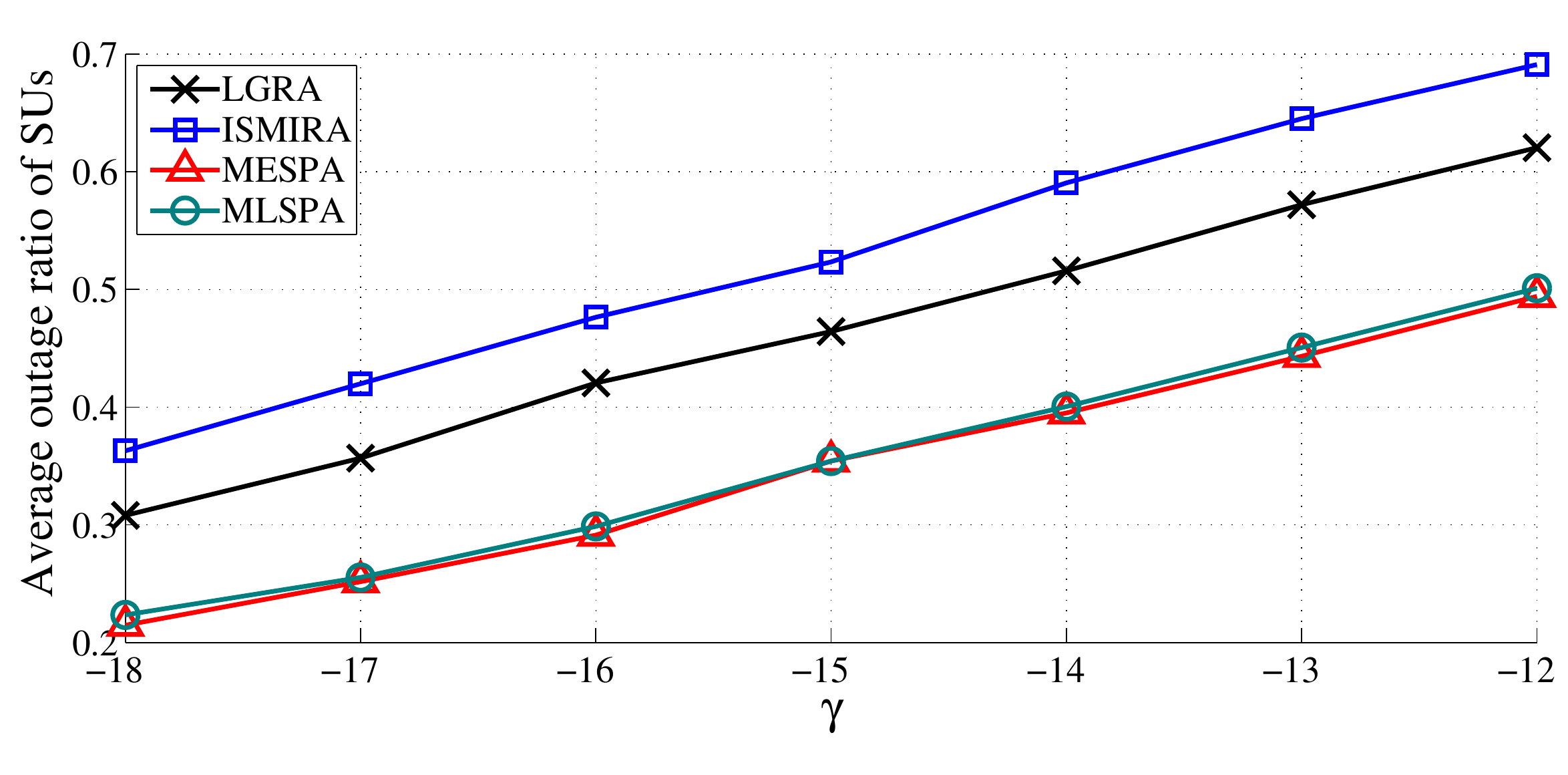}\\ 
		\caption{Average outage ratio for the SUs versus different values of $\gamma$ for the uplink scenario in Fig. \ref{fig:sim_Topology}(a).}
	\label{fig:SUs_outage_4cell1_versus_gamma}
	\end{figure}
	\begin{figure}
		\centering
		\includegraphics [width=3.5in]{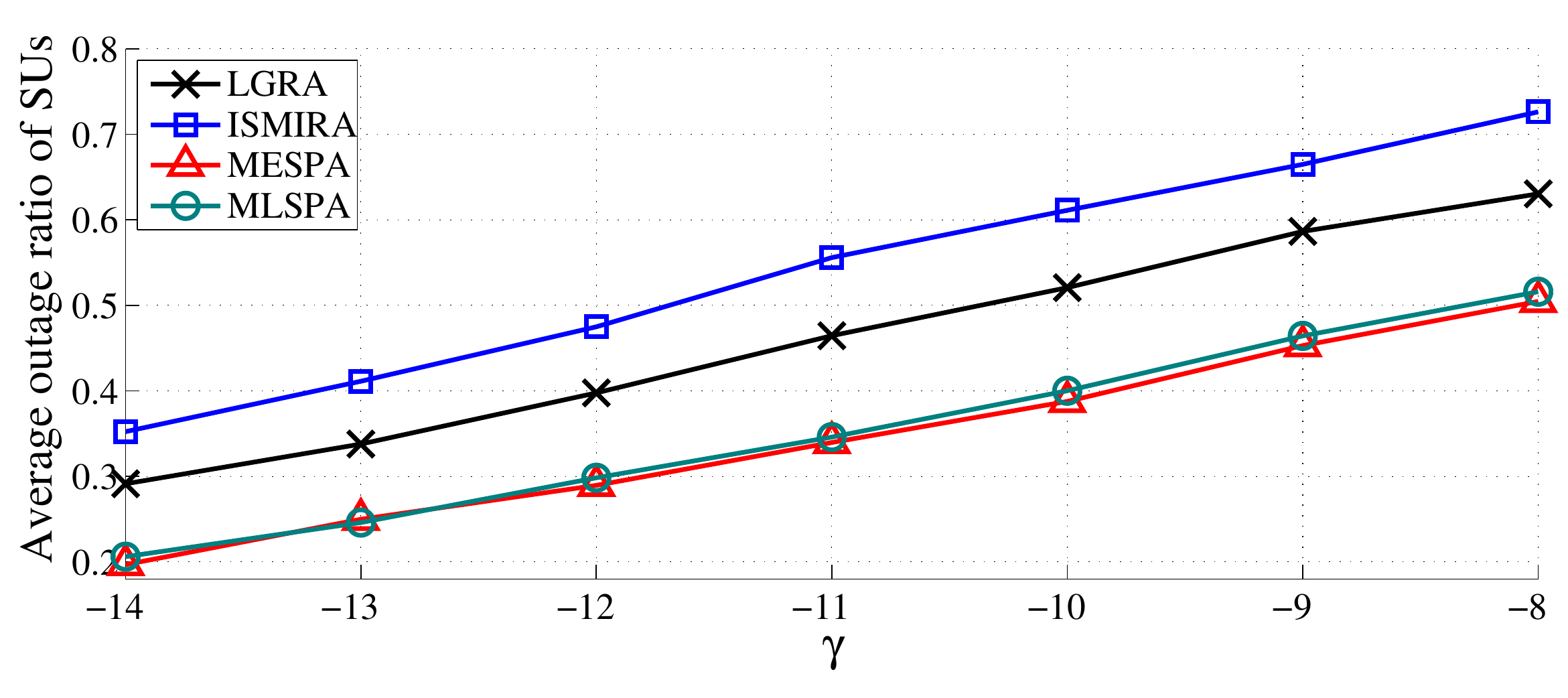}\\ 
		\caption{Average outage ratio for the SUs versus different values of $\gamma$ for the uplink scenario in Fig. \ref{fig:sim_Topology}(b).}
	\label{fig:SUs_outage_4cell2_versus_gamma}
	\end{figure}
	\begin{figure}
		\centering
		\includegraphics [width=3.5in]{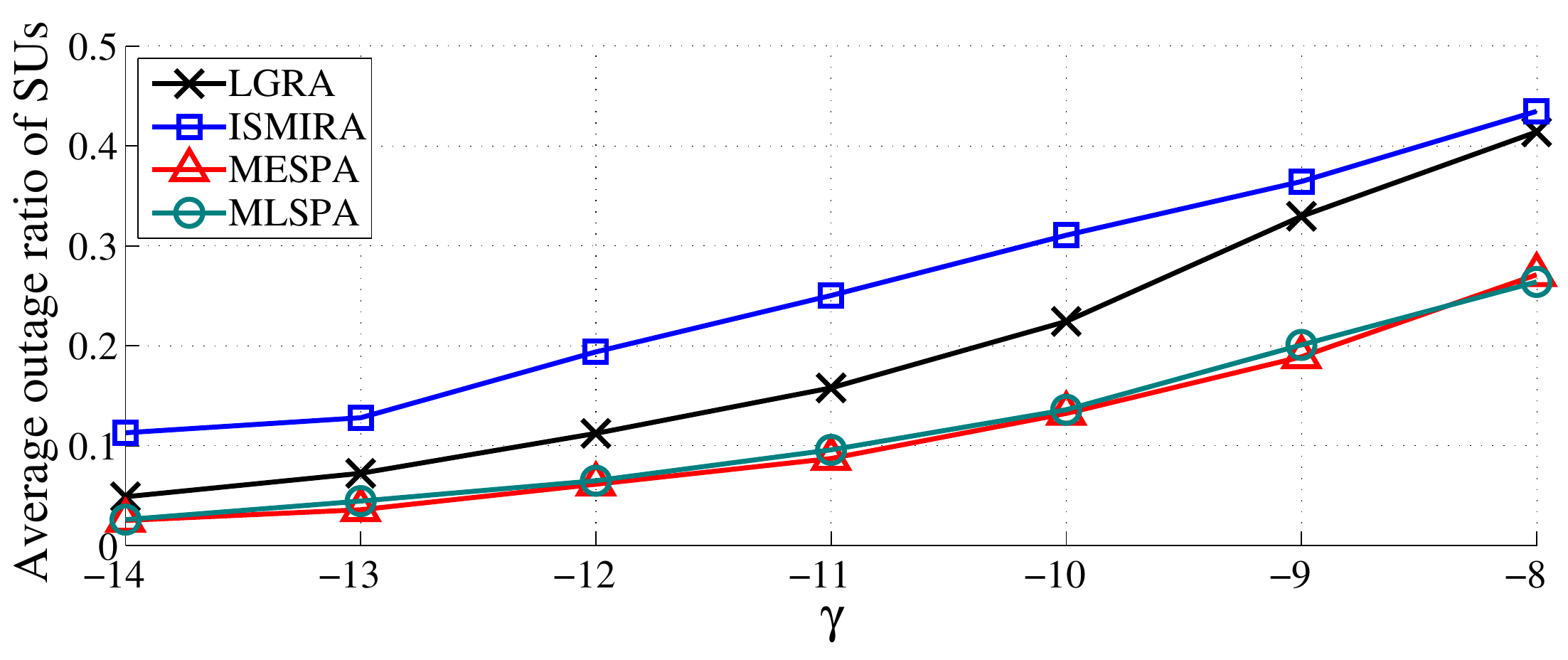}\\ 
		\caption{Average outage ratio for the SUs versus different values of $\gamma$ for the uplink scenario in Fig. \ref{fig:sim_Topology}(c).}
	\label{fig:SUs_outage_Hex_versus_gamma}
	\end{figure}
	
	To show how the performances of the algorithms are affected by different values of the target-SINRs of the users, we consider the case where there exist an average of 8 SUs distributed (according to PPP) in each cell  (average of 16  SUs in the networks according to Figs. \ref{fig:sim_Topology}(a) and \ref{fig:sim_Topology}(b), and average of 24  SUs in the network according to Fig. \ref{fig:sim_Topology}(c)). We consider that the target-SINR of each user (PU or SU) is randomly chosen from the set of $\{\gamma, \gamma-6\}$ dB in all scenarios for all algorithms. Here $\gamma$ varies from -18 to -12 dB with the step size of 1 dB for the network in Fig. \ref{fig:sim_Topology}(a), and varies from -14 to -8 dB with the step size of 1 dB for the networks in  Figs. \ref{fig:sim_Topology}(b) and \ref{fig:sim_Topology}(c). Figs. \ref{fig:SUs_outage_4cell1_versus_gamma}, \ref{fig:SUs_outage_4cell2_versus_gamma}, and \ref{fig:SUs_outage_Hex_versus_gamma} show the average outage ratio of SUs of the networks in Figs. \ref{fig:sim_Topology}(a), \ref{fig:sim_Topology}(b), and \ref{fig:sim_Topology}(c), respectively. It is observed  that our proposed algorithms offer a lower  average outage ratio for the SUs in comparison to that for ISMIRA and LGRA for all values of $\gamma$.
	
\subsubsection{Performance under varying standard deviation of shadow fading}

	\begin{figure}
		\centering
		\includegraphics [width=3.5in]{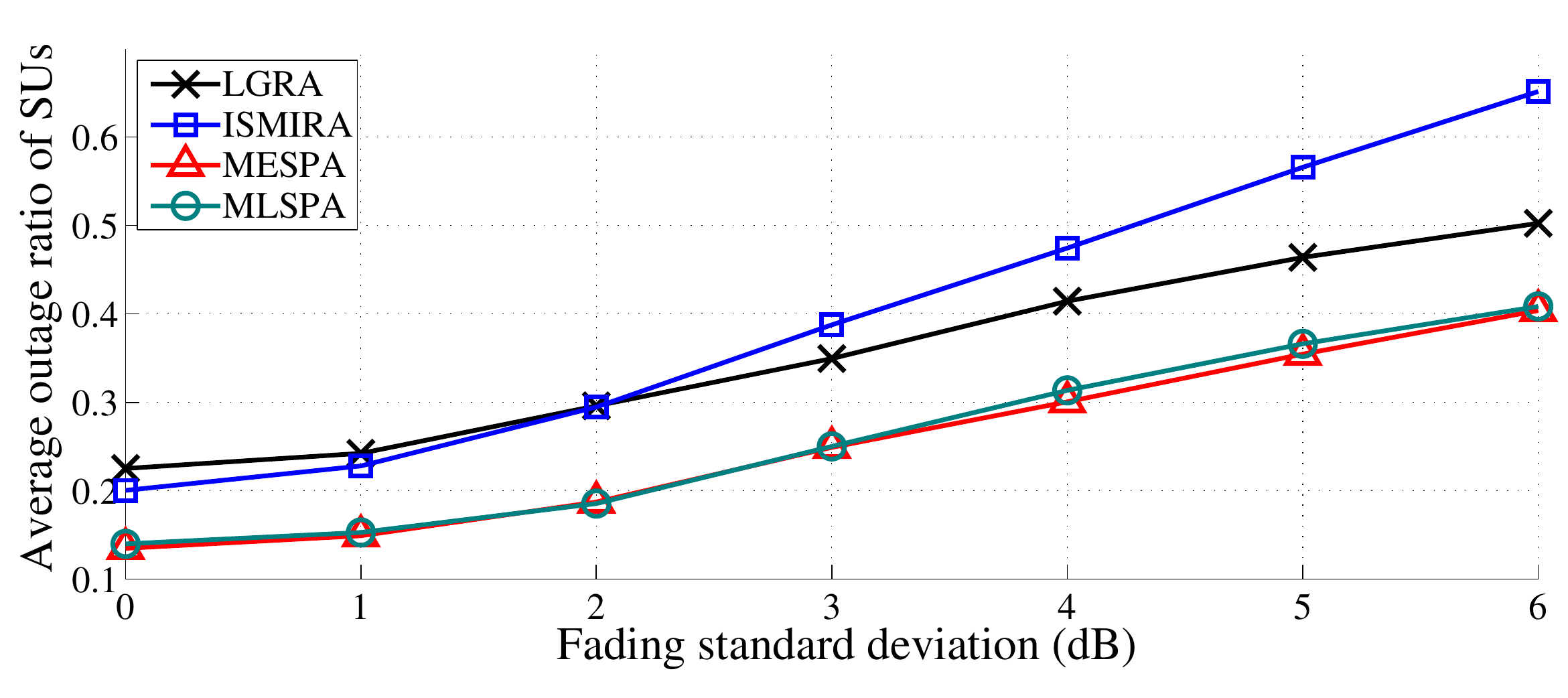}\\ 
		\caption{Average outage ratio for the SUs versus different values of the standard deviation of the log-normal fading for the uplink scenario in Fig. \ref{fig:sim_Topology}(a).}
	\label{fig:SUs_outage_4cell1_versus_fading}
	\end{figure}
	\begin{figure}
		\centering
		\includegraphics [width=3.5in]{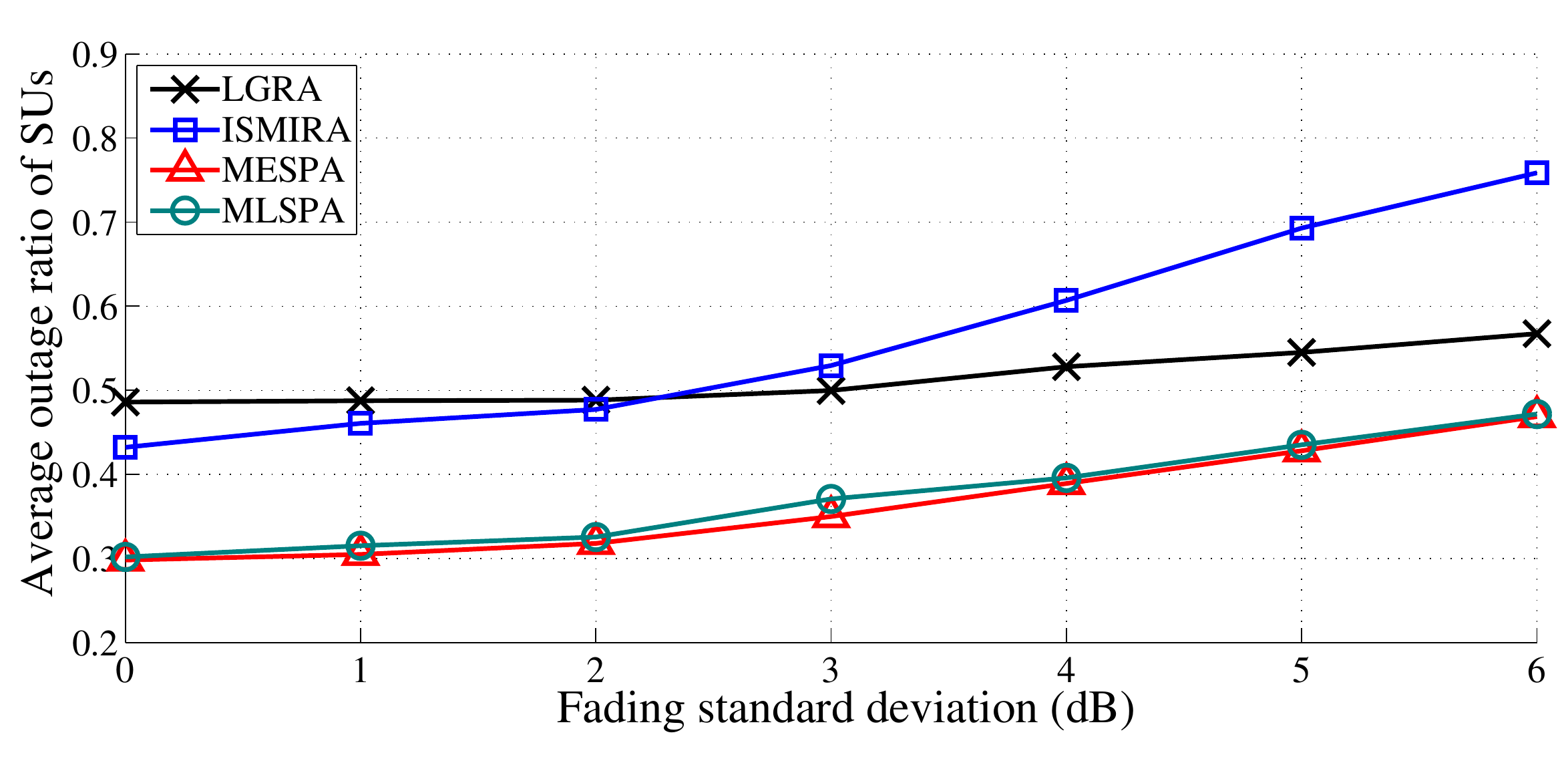}\\ 
		\caption{Average outage ratio for the SUs versus different values of the standard deviation of the log-normal fading for the uplink scenario in Fig. \ref{fig:sim_Topology}(b).}
	\label{fig:SUs_outage_4cell2_versus_fading}
	\end{figure}
	\begin{figure}
		\centering
		\includegraphics [width=3.5in]{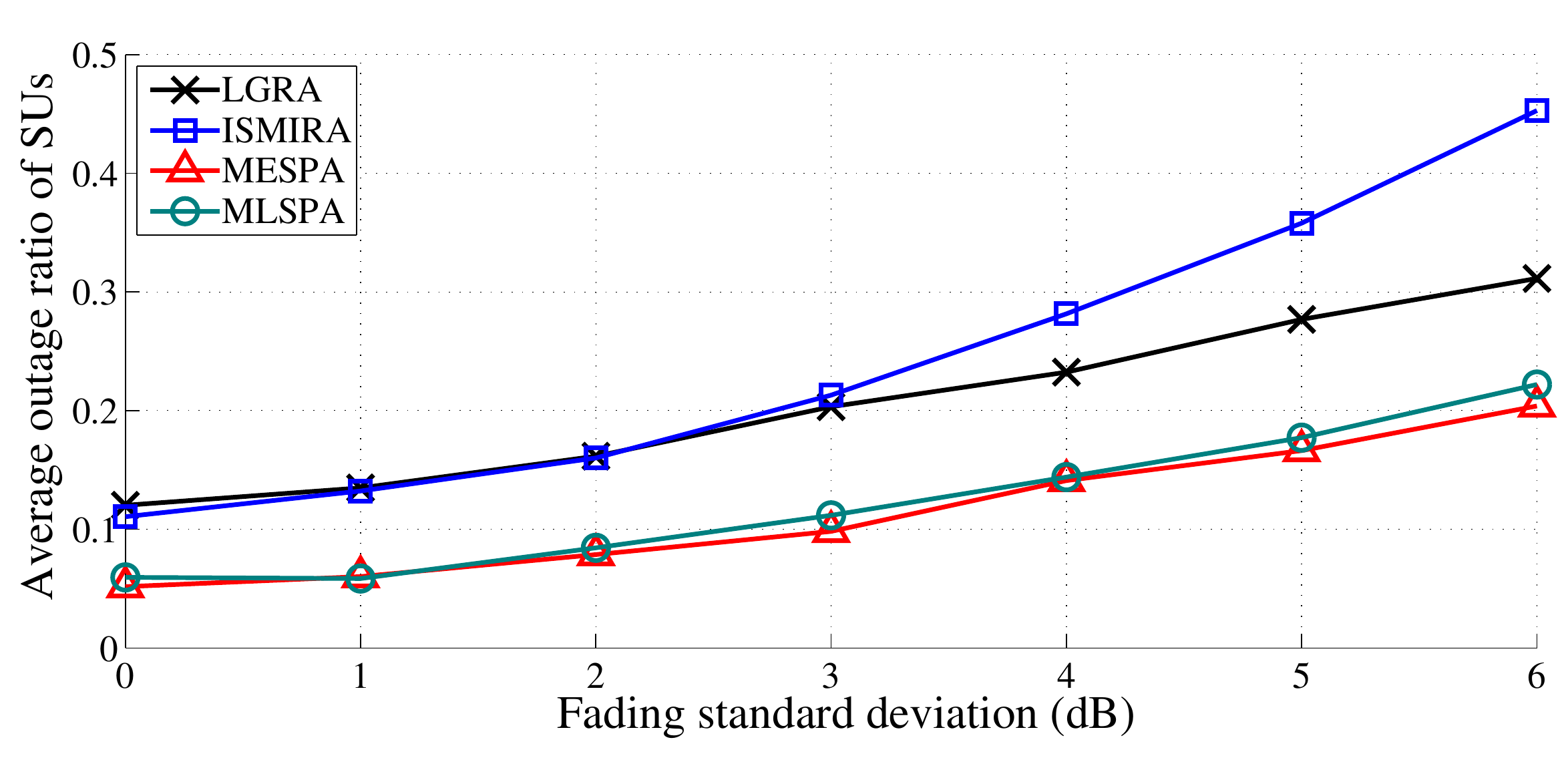}\\ 
		\caption{Average outage ratio for the SUs versus different values of the standard deviation of the log-normal fading for the uplink scenario in Fig. \ref{fig:sim_Topology}(c).}
	\label{fig:SUs_outage_Hex_versus_fading}
	\end{figure}
	
	Finally, the effect of different values of the standard deviation of log-normal fading is studied. As in the previous scenario, we consider the case where there exist an average of 8 SUs distributed according to PPP in each cell  (average of 16  SUs in the networks shown in Figs. \ref{fig:sim_Topology}(a) and \ref{fig:sim_Topology}(b), and average of 24  SUs in the network shown in Fig. \ref{fig:sim_Topology}(c)).  We consider that the target-SINR for each user is randomly chosen from the set $\{-16, -22\}$ dB for the 4-cell network shown in Fig. \ref{fig:sim_Topology}(a), and from the set $\{-10, -16\}$ dB for the networks shown in Figs. \ref{fig:sim_Topology}(b) and \ref{fig:sim_Topology}(c). Figs. \ref{fig:SUs_outage_4cell1_versus_fading}, \ref{fig:SUs_outage_4cell2_versus_fading}, and \ref{fig:SUs_outage_Hex_versus_fading} show the average outage ratio of the SUs for the networks scenarios according to Figs. \ref{fig:sim_Topology}(a), \ref{fig:sim_Topology}(b), and \ref{fig:sim_Topology}(c), respectively, versus different values of the standard deviation of log-normal fading varying from 0 to 6 dB with the step size of 1 dB. It is seen that as the standard deviation of the fading increases, all algorithms show higher average outage ratio. Besides, the performances of our proposed algorithms  are always better than those of ISMIRA and LGRA.

\section{Conclusion}	
\label{sec:conclusions}
We have studied the problem of obtaining the maximum feasible set of users in prioritized multi-tier infrastructure-based cellular networks. We have first obtained a simple relationship between a given SINR vector and its corresponding uplink/downlink power vector based on which we have devised two novel JPAC algorithms for underlay cellular wireless networks. To the best of our knowledge, the proposed algorithms are the first JPAC algorithms for prioritized multi-tier cellular networks in the literature supporting more than two priority levels. The complexities of our proposed algorithms are much lower than those of existing algorithms. Numerical results show that, when compared to existing algorithms, our proposed algorithms support higher number of users with their QoS requirements by considering the priority constraints. The effects of channel gain uncertainties on the performances of the proposed JPAC algorithms and development of  robust  JPAC algorithms for prioritized multi-tier cellular networks will be considered in our future work.
	
\begin{appendices}
	
	\section{Proof of Proposition \ref{prop:2}}
		\label{apx:1}
		Let $\Phi_{b_i}$ be the total received power plus noise at the BS serving user $i$, i.e.,
		\begin{align}
		\label{eq:phi_i}
			\Phi_{b_i}=\sum\limits_{j\in\M}{\!\!\hup{b_i}{j} p_j} \!  + \Nup{b_i}.
		\end{align}
		From {\eqref{eq:1}} and {\eqref{eq:phi_i}}, we have 
		\begin{equation}
		\label{eq:11}
			\gammai=
			\dfrac	{\hup{b_i}{i} p_i}
					{\Phi_{b_i}-\hup{b_i}{i} p_i}, \ \forall i\in\M.
		\end{equation}
		This results in
		\begin{align}
		\label{eq:pi_versus_phi}
			p_i= \teta{i} \dfrac{\Phi_{b_i}}{\hup{b_i}{i}}. %= \dfrac{\teta{i}}{\hup{b_i}{i}} \Phi_{b_i}.
		\end{align}
		From {\eqref{eq:pi_versus_phi}}, for each $m,n\in\B$, the following is obtained:
		\begin{align}
		\label{eq:22}
			\sum\limits_{i\in\M_n^{\B}}{p_i \hup{m}{i}} = \Phi_{n} \sum\limits_{i\in\M_n^{\B}}{\frac{\hup{m}{i}}{\hup{n}{i}}\teta{i}}.
		\end{align}
		By letting $m=n$ and adding $\sum_{i\notin\M_m^{\B}}{\!\!p_i \hup{m}{i}}+N_{m}$ to both sides of {\eqref{eq:22}}, $\Phi_{m}$ is obtained as
		\begin{align}
		\label{eq:23}
			\Phi_{m}
			= & \frac{\sum\limits_{i\notin\M_m^{\B}} {\!\!\!\big(p_i \hup{m}{i} \big)} + N_{m}} {1- \! \sum\limits_{i\in\M_m^{\B}}\!\!\! \left(\teta{i}\right)}.
		\end{align}
		From {\eqref{eq:22}} and {\eqref{eq:23}} we have
		\begin{align}
		\label{eq:24}
			\Phi_{m}
			= & \dfrac{\sum\limits_{\substack{n\in\B\\ n\neq m}} \ \sum\limits_{i\in\M_n^{\B}} {\!\!\!\big(p_i \hup{m}{i} \big)} + N_{m}} {1- \! \sum\limits_{i\in\M_m^{\B}}\!\!\! \left(\teta{i}\right)} \nonumber \\
			= & 
			\frac{\sum\limits_{\substack{n\in\B\\ n\neq m}} \ \Phi_{n} \sum\limits_{i\in\M_n^{\B}}{ \frac{\hup{m}{i}}{\hup{n}{i}}\teta{i}} + N_{m}} {1- \! \sum\limits_{i\in\M_m^{\B}}\!\!\! \left(\teta{i}\right)}.
		\end{align}
		This results in
		\begin{multline}
			\label{eq:25}
			\Phi_{m} \left( 1- \! \sum\limits_{i\in\M_m^{\B}}\!\!\! \left(\teta{i}\right) \right) -
			\sum\limits_{\substack{n\in\B\\ n\neq m}} \ \Phi_{n} \sum\limits_{i\in\M_n^{\B}}{ \dfrac{\hup{m}{i}}{\hup{n}{i}}\teta{i}}
					\\	= N_{m},\ \ \ \  m=1,2,...,B.
		\end{multline}
		Writing {\eqref{eq:25}} in matrix form results in {\eqref{eq:phi}}.
	\section {Proof of Proposition \ref{prop:3}}
	\label{apx:2}
		Let $\Phidown_{i}$ be the total received power plus noise at user $i$. We have
		\begin{align}
		\label{eq:phi_i_down}
			\Phidown_{i} &=\sum\limits_{j\in\M}{\hdown{i}{b_j} p_j} \!  + \Ndown{i}.
		\end{align}
		Similar to \eqref{eq:pi_versus_phi}, from \eqref{eq:1down} we have
		\begin{align}
		\label{eq:pidown_versus_phi}
			\pdown_i= \dfrac{\gammadown_i}{(\gammadown_i+1)} \dfrac{\Phidown_{i}}{\hdown{i}{b_i}}, \ \ \ \forall i\in\M.
			%= \dfrac{\thetadown{i}}{\hdown{i}{b_i}} \Phidown_{i},\ \ \ \forall i\in\M.
		\end{align}	
		\eqref{eq:phi_i_down} can be rewritten as
		\begin{align}
			\label{eq:phi_i_down_sum_P}
				\Phidown_{i} &=\sum\limits_{j\in\M}{\hdown{i}{b_j} p_j} \!  + \Ndown{i} \nonumber \\
					&= \sum\limits_{j\in\M_1}{\!\!\! \hdown{i}{b_j} p_j} + 
		 	\sum\limits_{j\in\M_2}{\!\!\!\hdown{i}{b_j} p_j} + \dots + 
		 	\sum\limits_{j\in\M_B}{\!\!\!\hdown{i}{b_j} p_j} + \Ndown{i} \nonumber \\
				&= \hdown{i}{1}\!\!\sum\limits_{j\in\M_1}{\!\! p_j} + 
		 	\hdown{i}{2}\!\!\sum\limits_{j\in\M_2}{\!\! p_j} + \dots +  
		 	\hdown{i}{B}\!\!\sum\limits_{j\in\M_B}{\!\! p_j} + \Ndown{i} \nonumber \\
				&= 	\sum\limits_{n\in\B} \hdown{i}{n}\Pdown_n + \Ndown{i}.
		\end{align}
		From \eqref{eq:pidown_versus_phi} and \eqref{eq:phi_i_down_sum_P}, for each $i\in\M$ we have
		\begin{align}
		\label{eq:pidown_versus_Pdown}
				%\pdown_i= \dfrac{\thetadown{i}}{\hdown{i}{b_i}} \times \left( \sum\limits_{n\in\B}{ \hdown{i}{n}\Pdown_n } + \Ndown{i} \right).
				\pdown_i= \frac{1}{\hdown{i}{b_i}} \thetadown{i} \times \left( \sum\limits_{n\in\B}{ \hdown{i}{n}\Pdown_n } + \Ndown{i} \right).
		\end{align}	
		For any $m\in\B$, summing the two parts of \eqref{eq:pidown_versus_Pdown} over all $i\in\M_m^{\B}$ results in
		\begin{align}
		\label{eq:pidown_versus_Pdown_sum}
			\sum\limits_{i\in\M_m^{\B}} \! \pdown_i   
			&= \!  \sum\limits_{i\in\M_m^{\B}}
				{ \sum\limits_{n\in\B}  \dfrac{\hdown{i}{n}}{\hdown{i}{b_i}} \thetadown{i} \Pdown_n  
	 } + \!\!
				\sum_{i\in\M_m^{\B}} \! \! \frac{1}{\hdown{i}{b_i}} \thetadown{i} \Ndown{i} 
				\nonumber \\
			&= \!  \sum\limits_{n\in\B}
						{ \Pdown_n  \left( \sum\limits_{i\in\M_m^{\B}}  \dfrac{\hdown{i}{n}}{\hdown{i}{b_i}} \thetadown{i}   \right)
	 } \!\! + \!\!
				\sum_{i\in\M_m^{\B}} \!\! \frac{1}{\hdown{i}{b_i}} \thetadown{i} \Ndown{i}.
		\end{align}
		Therefore,
		\begin{multline}
		\label{eq:pidown_versus_Pdown_sum2}
			\Pdown_m  =    \sum\limits_{\substack{n\in\B \\ n\neq m}}
				{ \Pdown_n \!\! \left( \sum\limits_{i\in\M_m^{\B}} \dfrac{\hdown{i}{n}}{\hdown{i}{m}} \thetadown{i}  \right)}  
				+\Pdown_m \!\sum\limits_{i\in\M_m^{\B}} \!\thetadown{i}
				\\
 				+ 
				\!\sum_{i\in\M_m^{\B}} \!\!\!\ \frac{1}{\hdown{i}{m}} \thetadown{i}\Ndown{i},
		\end{multline}
		and thus
		\begin{multline}
		\label{eq:pidown_versus_Pdown_sum3}
			\Pdown_m  \left( 1 \!- \!\!\!\sum\limits_{i\in\M_m^{\B}}\thetadown{i} \!\!\right) 
			-   \sum\limits_{\substack{n\in\B \\ n\neq m}}
				{  \Pdown_n  \left( \sum\limits_{i\in\M_m^{\B}}  \dfrac{\hdown{i}{n}}{\hdown{i}{m}} \thetadown{i}  \right)} 
				= \\
				\sum_{i\in\M_m^{\B}} \!\! \frac{1}{\hdown{i}{m}} \thetadown{i}\Ndown{i}.
		\end{multline}
		Writing \eqref{eq:pidown_versus_Pdown_sum3} in matrix form results in \eqref{eq:Pdown}.
\end{appendices}

\bibliographystyle{IEEEtran}
\bibliography{Mybib}

\begin{biography}[{\includegraphics[width=1in,height=1.25in,clip,keepaspectratio]{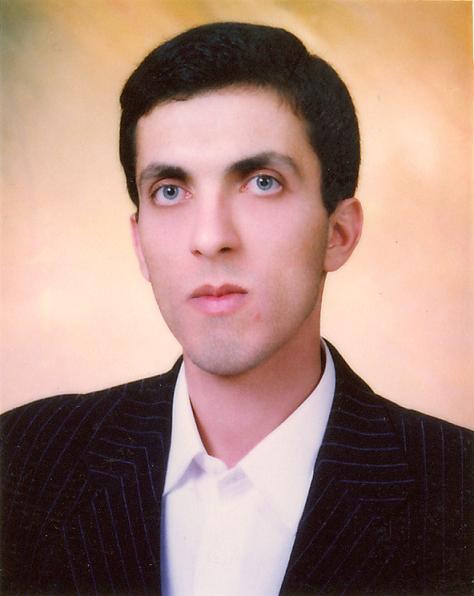}}]{Mehdi Monemi}
    received the B.Sc., M.Sc. degrees all in electrical and computer engineering from Shiraz University, Shiraz, Iran, and Tarbiat Modares University, Tehran, Iran, in 2001 and 2003, respectively. He has
    recently received Ph.D. degree in electrical and computer engineering from Shiraz University, Shiraz, Iran. His current research interests include resource allocation in wireless networks, and traffic engineering in computer networks.
\end{biography}

\begin{biography}[{\includegraphics[width=1in,height=1.25in,clip,keepaspectratio]{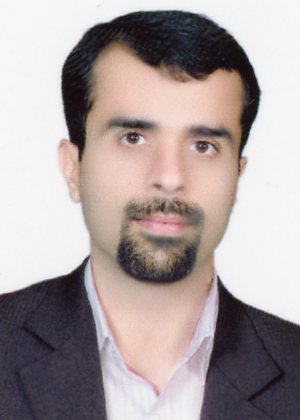}}]{Mehdi Rasti} (S'08-M'11) received his B.Sc. degree from Shiraz University, Shiraz, Iran, and the M.Sc. and Ph.D. degrees both from Tarbiat Modares University, Tehran, Iran, all in Electrical Engineering in 2001, 2003 and 2009, respectively. 
From November 2007 to November 2008, he was a visiting researcher at the Wireless@KTH, Royal Institute of Technology, Stockholm, Sweden. From September 2010 to July 2012 he was with Shiraz University of Technology, Shiraz, Iran, after that he joined the Department of Computer Engineering and Information Technology, Amirkabir University of Technology, Tehran, Iran, where he is now an assistant professor. 
From June 2013 to August 2013, and from July 2014 to August 2014 he was a visiting researcher in the Department of Electrical and Computer Engineering, University of Manitoba, Winnipeg, MB, Canada. His current research interests include 
radio resource allocation in wireless networks and network security.
\end{biography}

\begin{biography} [{\includegraphics[width=1in,height=1.25in,clip,keepaspectratio]{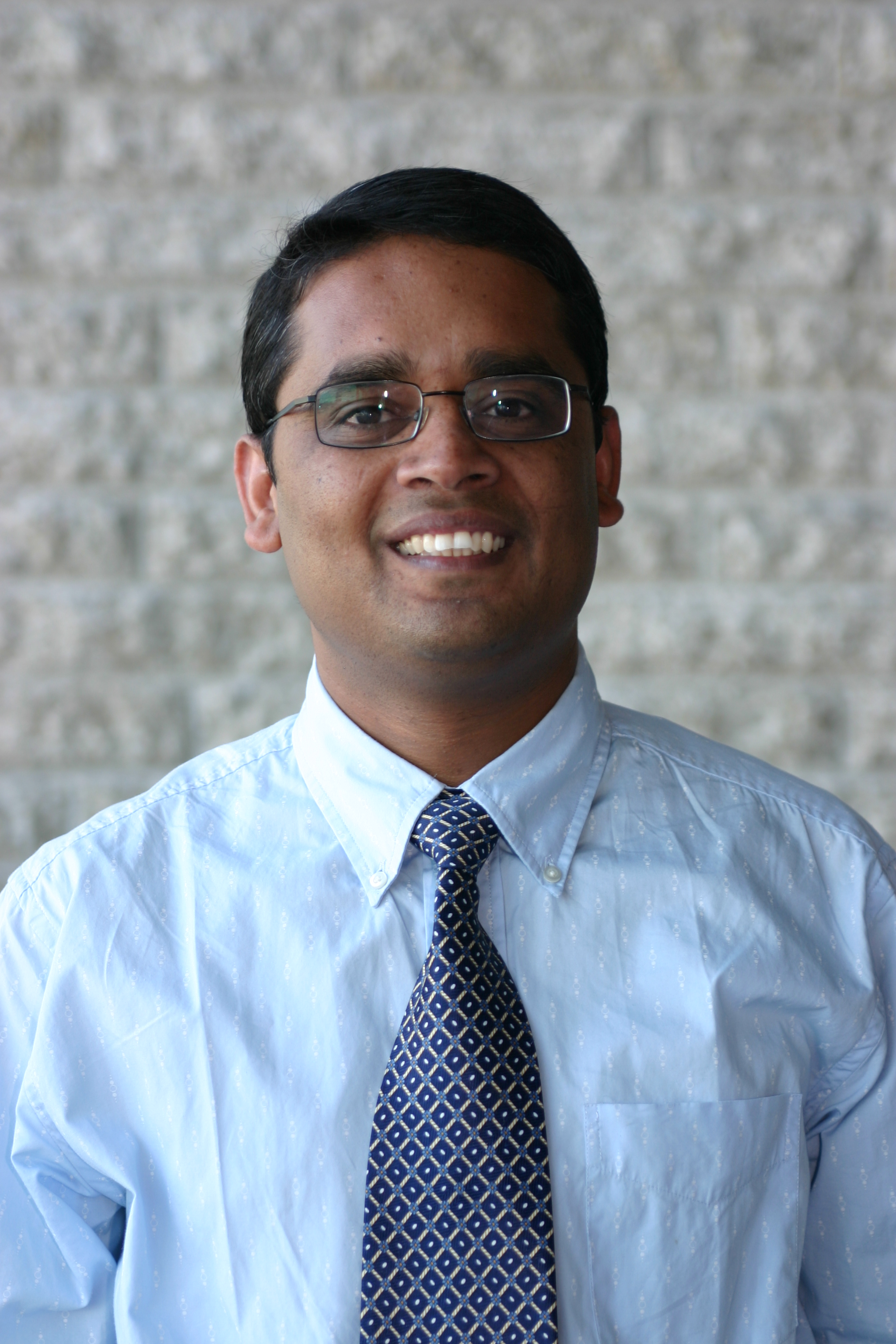}}]
{Ekram Hossain} (F'15)
is a Professor in the Department of Electrical and Computer
Engineering at University of Manitoba, Winnipeg,
Canada. He received his Ph.D. in Electrical
Engineering from University of Victoria,
Canada, in 2001. Dr. Hossain's current research
interests include design, analysis, and optimization
of wireless/mobile communications networks, cognitive
radio systems, and network economics. He
has authored/edited several books in these areas
(http://home.cc.umanitoba.ca/$\sim$hossaina). He was elevated to an IEEE Fellow ``for contributions to spectrum management and resource allocation in cognitive and cellular radio networks". 
%Currently he
%serves as the Editor-in-Chief for the \textit{IEEE Communications Surveys and
%Tutorials} and an Editor for \textit{IEEE Wireless Communications}. Also, he is a
%member of the IEEE Press Editorial Board. Previously, he served as the Area
%Editor for the \textit{IEEE Transactions on Wireless Communications} in the area of
%``Resource Management and Multiple Access'' from 2009-2011, an Editor for
%the \textit{IEEE Transactions on Mobile Computing} from 2007-2012, and an Editor
%for the \textit{IEEE Journal on Selected Areas in Communications - Cognitive Radio
%Series} from 2011-2014. 
Dr. Hossain has won several research awards including
the IEEE Communications Society Transmission, Access, and Optical Systems (TAOS) Technical Committee's Best Paper Award in IEEE Globecom 2015, University of Manitoba Merit Award in 2010 and 2014 (for Research and
Scholarly Activities), the 2011 IEEE Communications Society Fred Ellersick
Prize Paper Award, and the IEEE Wireless Communications and Networking
Conference 2012 (WCNC'12) Best Paper Award. He is a Distinguished Lecturer of the
IEEE Communications Society (2012-2015). He is a registered Professional
Engineer in the province of Manitoba, Canada.
\end{biography}

\end{document}